\newtheorem{theorem}{Theorem}
\newtheorem{lemma}{Lemma}
\newtheorem{definition}{Definition}
\newcommand{\myparskip}{3pt}
\begin{document}
\title{Vulnerability of D2D Communications from Interconnected Social Networks}
\author[1]{Tianyi Pan}
\author[1]{Md Abdul Alim}
\author[1]{Xiang Li}
\author[1]{My T. Thai}

\affil[1]{University of Florida, Gainesville, FL, USA

\{tianyi,alim,xixiang,mythai\}@cise.ufl.edu}

\begin{titlepage}
\maketitle
\thispagestyle{empty}

\begin{abstract}
In this paper, we study how rumors in Online Social Networks (OSNs) may impact the performance of device-to-device (D2D) communication. As D2D is a new technology, people may choose not to use it when believed in rumors of its negative impacts. Thus, the cellular network with underlaying D2D is vulnerable to OSNs as rumors in OSNs may decrement the throughput of the cellular network in popular content delivery scenarios. To analyze the vulnerability, we introduce the problem of finding the most critical nodes in the OSN such that the throughput of a content delivery scenario is minimized when a rumor starts from those nodes. We then propose an efficient solution to the critical nodes detection problem. The severity of such vulnerability is supported by extensive experiments in various simulation settings, from which we observe up to $40\%$ reduction in network throughput.
\end{abstract}

\end{titlepage}
\section{Introduction}\label{sec:intro}
D2D communication is a promising approach to cope with the rapidly increasing demand of mobile data \cite{cisco2016global}, in which user equipments (UEs) directly communicate with each other while bypassing the cellular network's base stations (BSs). For the cellular network, utilizing D2D communication can therefore offload its mobile data traffic and significantly boost the overall performance \cite{doppler2009device,yu2011resource,fodor2012design,lei2012operator}.

However, as a new technology, D2D communication will be likely to face doubts from various perspectives, such as efficiency, safety, etc. When users lacking the knowledge of D2D, exaggerated disadvantages of D2D can also be generated and spread as rumors, either randomly by some normal users, or intentionally by malicious individuals. With OSN as the medium, rumors can propagate conveniently and affect a large portion of the users in the network \cite{nguyen2010novel,nguyen2012containment,fan2013least} and cause the users to opt-out of D2D. When a requester chooses not to use D2D, its request has to be served by a cellular link; when a relay device disables D2D, some requesters in its proximity may not be served via D2D anymore. Both situations can potentially increase the burden of the BS and the overall throughput can degrade. Therefore, the cellular network is vulnerable to the rumors spreading in its interconnected OSN. An example is depicted in Fig.~\ref{fig:social-d2d}.

\begin{figure}[!ht] \centering
\subfloat[D2D link between devices $1,2$ helps offloading traffic from the BS.]{
 	\includegraphics[width=0.45\linewidth]{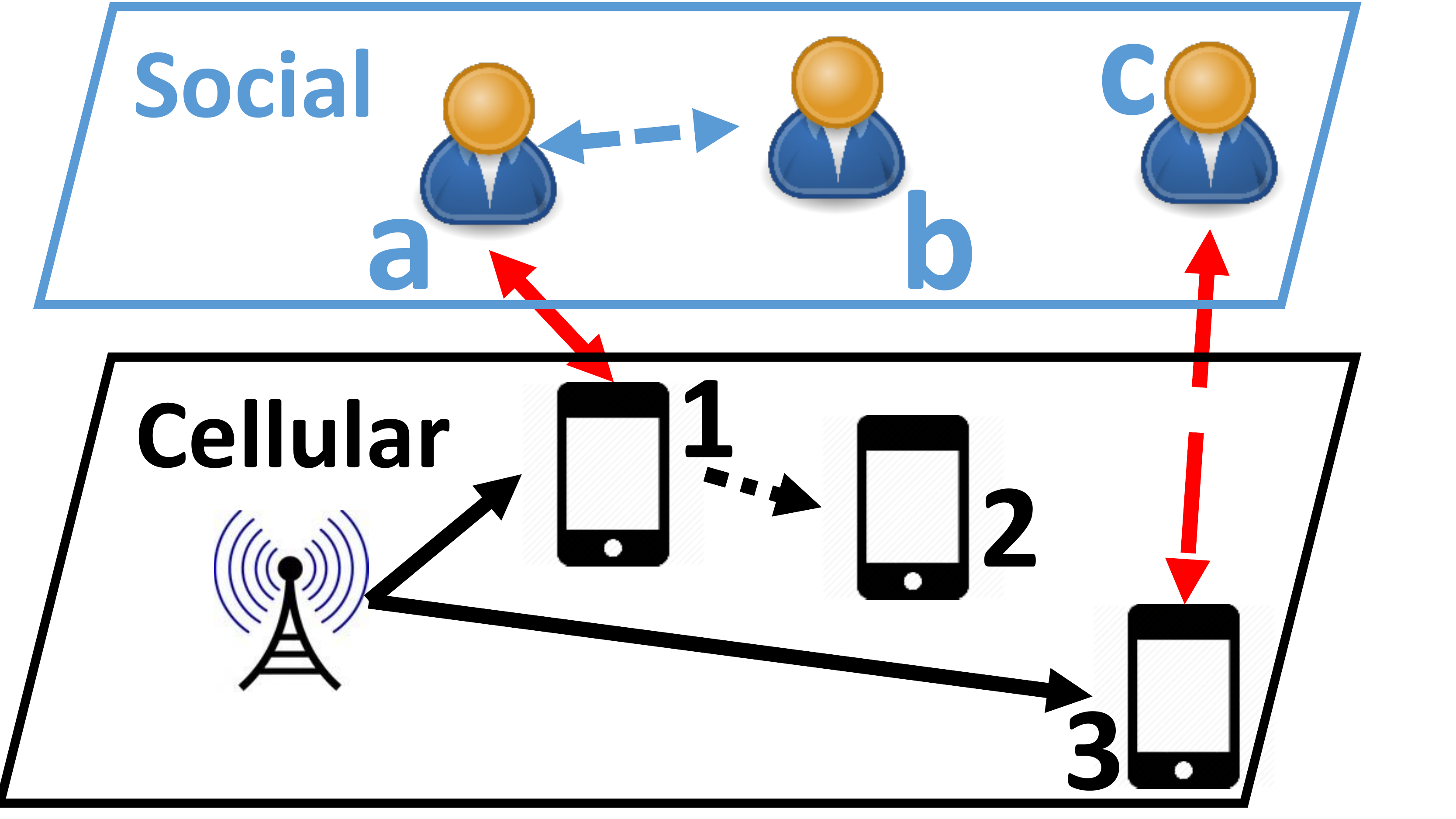}
 	\label{fig:beforemis}}
    \hspace{.09cm}
\subfloat[Throughput drops when device $1$ opt-out of D2D due to rumor.]{ 	\includegraphics[width=0.45\linewidth]{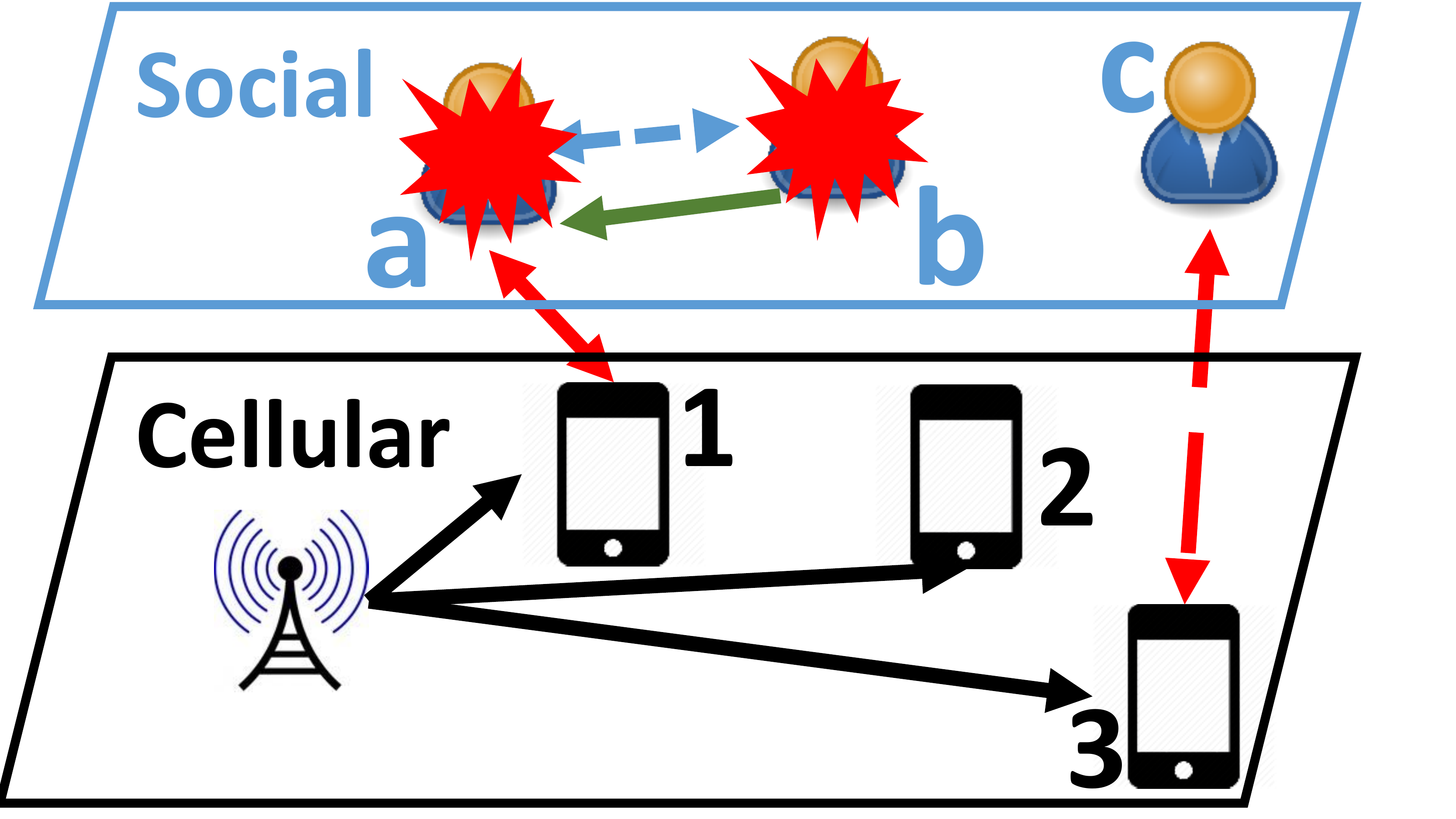}
 	\label{fig:aftermis}}
    \caption{Rumor Impacts D2D Performance}
    \label{fig:social-d2d}
 \end{figure}
 
To assess the vulnerability of the cellular network from an interdependent OSN, we propose the TMIN (Throughput Minimization in INterdependent D2D/OSN) problem that asks for a set of $k$ nodes in the OSN, such that when a rumor starts from those nodes, the throughput in the cellular network for a popular content delivery scenario is minimized. In order to consider the vulnerability in the \textit{worst} case, we assume the rumor starts early and all OSN users believed in the rumor will disable D2D on their devices \textit{before} the content delivery starts. The throughput of the content delivery scenario in the cellular network is then calculated based on the (possibly) reduced number of D2D devices. The main challenge of TMIN is that the data transmission problem in the cellular network and the information propagation problem in the OSN must be jointly considered in order to accurately depict the impact or rumors to D2D communication. With high complexity from both problems, it is difficult to optimally solve TMIN.

To cope with the challenges, we propose Rumor the Critical Framework (RCF) that connects the two networks by first evaluating the criticality of UEs in content delivery and mapping the values to the OSN, and then efficiently solve the Influence Maximization (IM) problem in the OSN, considering node criticality. 

In summary, our contributions are as follows:

\begin{itemize}
\item We analyze the vulnerability of a cellular network with underlaying D2D from an interdependent OSN, by solving the problem TMIN of finding the most critical nodes in the OSN to the throughput in the cellular network.
\item We propose the algorithm RCF that can efficiently make near optimal decisions for TMIN.
\item We experimentally evaluate the vulnerability in various simulation settings with realistic cellular network parameters and real-world OSN data. 
\end{itemize}

\textbf{Related Work.} 
Recently, the use of D2D communication to improve the
spectrum efficiency and system capacity has received much interest \cite{doppler2009device,yu2009performance,janis2009interference,madan2010cell,xu2010effective,xing2010investigation,yu2011resource,min2011capacity,han2012mobile,lei2012operator,fodor2012design,pei2013resource,wang2013energy,wang2013joint,zhang2013distributed,lee2013resource,xu2013efficiency,chen2015exploiting,zhang2015social,li2016optimal}. Towards this direction, previous works studied resource allocation \cite{yu2011resource,wang2013joint,zhang2013distributed,xu2013efficiency,lee2013resource}, interference management \cite{janis2009interference,xu2010effective,madan2010cell,min2011capacity}, power control \cite{wang2013energy,yu2009performance,xing2010investigation} and base station scheduling \cite{li2016optimal}. In order to construct stable D2D transmission links, the social aspect can be utilized. Social trust and social reciprocity have been leveraged in \cite{chen2015exploiting} to enhance cooperative D2D communication based on a coalitional game. In \cite{zhang2015social}, encounter history of the devices as well as their social relations are exploited in order to deploy a reliable D2D communication mechanism. Social communities \cite{nguyen2011overlapping} were also utilized \cite{alim2016leveraging} to facilitate finding stable D2D links. However, none of the existing works can capture how the information diffusion phenomenon in OSNs affects different performance metrics of  D2D communication underlaid cellular networks.

In OSNs, the NP-Complete IM problem \cite{Kempe03} is widely studied. Since the seminal work by Kempe et al. \cite{Kempe03}, a plethora of works for IM have been proposed \cite{Leskovec07,Chen09,Goyal11a,chen2010scalable,Goyal11b,jung2012irie,dinh2012cheap,dinh2014cost,borgs14reverse,Tang14,Tang15,nguyen2016targeted,nguyen2016stop,kuhnle2017scalable,li2017approximate}. The state-of-the-art works \cite{borgs14reverse,Tang14,Tang15,nguyen2016targeted,nguyen2016stop,li2017approximate} follow the RIS approach proposed in \cite{borgs14reverse}. They guarantee the $(1-\frac{1}{e}-\epsilon)$ approximation ratio with a time complexity near linear to the network size. The existing works usually focus on the OSNs, without considering how the information can influence other networks, like D2D networks, that are interconnected with the OSNs.

\textbf{Organization.} 
The rest of the paper is organized as follows. In Sect.~\ref{sec:system}, we describe the cellular network, the OSN, their dynamics, interactions and vulnerability. We also formally define the TMIN problem. We propose our solution RCF to TMIN in Sect.~\ref{sec:solution}. The experiment results are illustrated in Sect.~\ref{sec:experiment}. We conclude the paper in Sect.~\ref{sec:conclusion}.

\section{System Model}\label{sec:system}
In this section, we present the system model of interest and introduce the two components, the cellular network and the OSN respectively. Then we formally define the TMIN problem. 

\subsection{The Cellular Network}\label{ssec:cellularnetwork}
\begin{figure}[!ht] \centering
 	\includegraphics[width=0.8\linewidth]{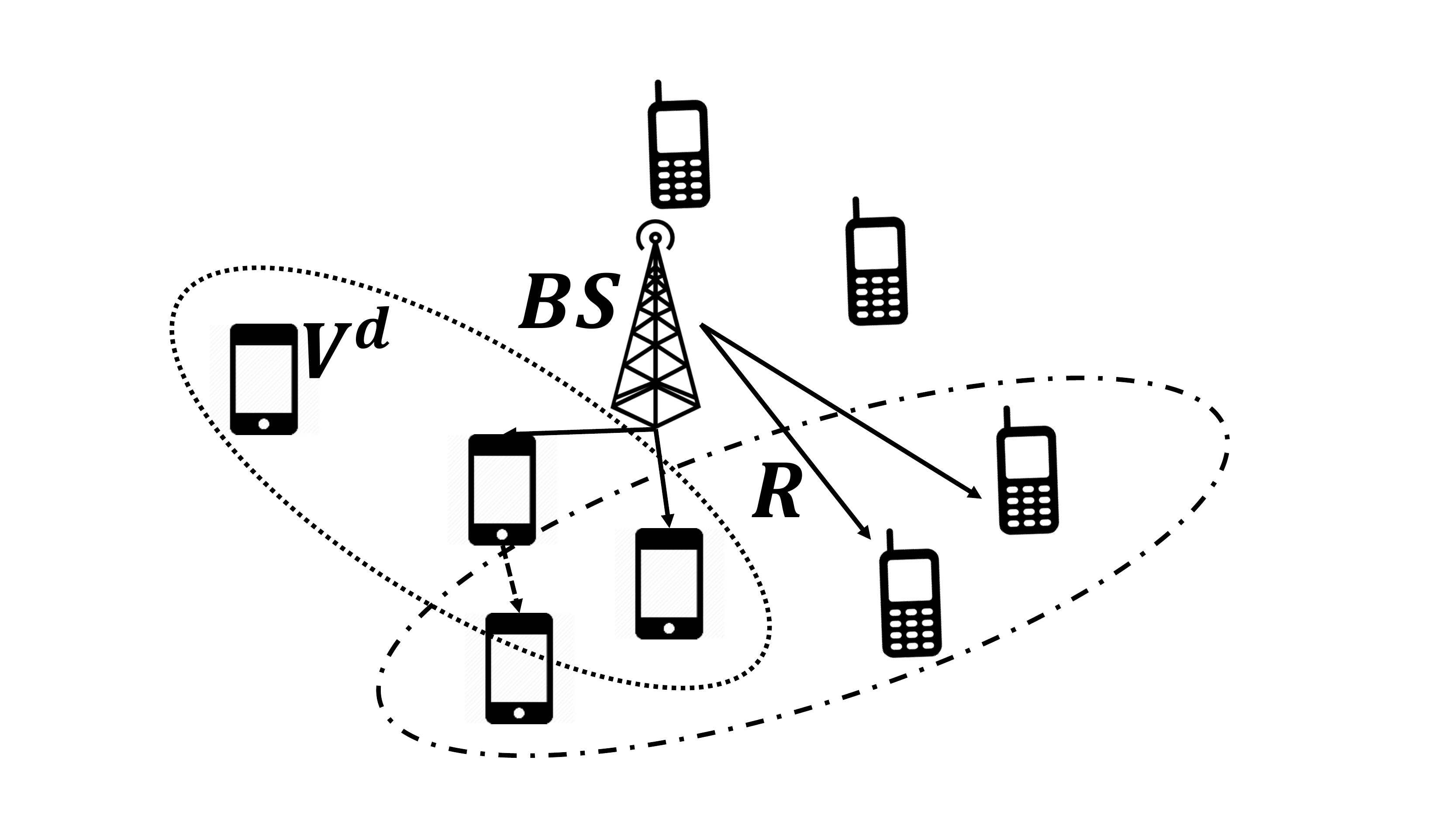}
 	\caption{The Cellular Network}
 	\label{fig:cellularnetwork}
 \end{figure}
We study a cellular system as illustrated in Fig.~\ref{fig:cellularnetwork}. It has a single BS $B$ and a set $\mathcal{D}$ of UEs. We denote the cellular network as $G^c=(V^c, E^c)$, where $V^c = \mathcal{D}\cup\{B\}$ and $E^c$ denotes all cellular/D2D links. A subset of UEs $V^d\subseteq \mathcal{D}$ enables D2D. Denote $G^d=(V^d,E^d)$ as the network induced in $G^c$ by $V^d$. A link $(i,j)\in E^d$ exists if and only if UEs $i,j\in V^d$ are within D2D communication range. A subset $R\subseteq \mathcal{D}$ of UEs request the same content. $R=R^c\cup R^d$ where $R^c = R\backslash V^d$ and $R^d = R\cap V^d$. As a UE in $R^c$ disables D2D, its request can only be served by cellular links. Instead, a UE in $R^d$ can have its request fulfilled either by cellular or D2D links. The set of UEs $V^r = V^d\backslash R^d$ does not request the data, but are able to be relay devices for UEs in $R^d$. For a node $i$, denote $N^{-}(i), N^{+}(i)$ as the set of incoming/outgoing links of $i$, respectively. 

\subsubsection{Cellular Resources}\label{sssec:cellularresources}
In this paper, we consider the resource sharing model discussed in \cite{yu2011resource,lei2012operator,li2016optimal}. In the model, the D2D and cellular links use disjoint portion of the licensed band. Denote the total bandwidth of the BS as $W$. If a resource allocation scheme allocates $W^c$ of the band to cellular links, the bandwidth for D2D links is then $W-W^c$. Therefore, interference only exists among D2D links. Additionally, there is no further division in the frequency domain, so that cellular/D2D links will use the full bandwidth $W^c$/$W^d$. 

\subsubsection{Data Rates}\label{sssec:datarate}
Based on the resource sharing model, a cellular link $(B,i)$ receives no interference from D2D links and uses the full bandwidth. We can express its data rate $r(B,i)$ under distance-dependent path loss and multipath Rayleigh fading as in \eqref{eqn:cellularrate}, where $\gamma_{B,i}$ is the Signal to Noise Ratio (SNR) for $(B,i)$. 

\begin{align}
&r(B,i) = W^c\log_2(1+\gamma(B,i))\label{eqn:cellularrate}\\
&\gamma_{B,i} = \frac{p_B d_{Bi}^{-\alpha}|m_0|^2}{N_0}\label{eqn:snr}
\end{align}

In $\gamma_{B,i}$, $p_B$ is the transmit power of the BS; $d_{Bi}$ is the distance between $B,i$; $\alpha$ is the path loss exponent; $m_0$ is the fading component and $N_0$ is the power of the receiver noise, which is assumed to be the additive white Gaussian noise.

For a D2D link $(j,k)$, we must include interference from other D2D links when calculating data rate $r(j,k)$. Denote $\mathcal{L}$ as the set of D2D links that transmit at the same time with $(j,k)$, we can calculate $r(j,k)$ using \eqref{eqn:d2drate}.

\begin{align}
&r(j,k) = W^d\log_2(1+\gamma(j,k))\label{eqn:d2drate}\\
&\gamma(j,k) = \frac{p_j d_{jk}^{-\alpha}|m_0|^2}{\sum_{(j',k')\in \mathcal{L}}p_{j'}d_{j'k}^{-\alpha}|m_0|^2 +N_0} \label{eqn:sinr}
\end{align}

$\gamma(j,k)$ is the Signal to Interference and Noise Ratio (SINR) and $p_j,p_{j'}$ are the transmit powers for UEs $j,j'$ respectively. 

The model can be extended to handle multiples BSs by adding intercell interference to the denominator part of \eqref{eqn:snr} and \eqref{eqn:sinr}. We model device mobility (when necessary) as multiple snapshots of static cellular networks for popular content delivery scenarios, since such scenarios usually happen at locations where devices have limited mobility (stadium, shopping mall) or the mobility follows certain routine (office, school).

\subsection{The Online Social Network}
We abstract the OSN to be a weighted directed graph $G^s$ with a node set $V^s$ and a directed link set $E^s$, where a node $v\in V^s$ represents a user. A link $(u,v)\in E^s$ exists if and only if node $v$ follows node $u$ in the OSN. Also, each $(u,v)$ is associated with a weight $p(u,v)\in [0,1]$ for information propagation. 
\subsubsection{Information Propagation}
To characterize how rumor propagates in the OSN, we will focus on the Independent Cascading (IC) model \cite{Kempe03} in this paper. However, our results can be easily extended to the Linear Threshold (LT) model. 

In the IC Model, initially no nodes believe the rumor, we term this as the unactivated status. Given a seed set $S$, the rumor propagates in rounds. In round $0$, only the nodes $v\in S$ are activated (believed in the rumor). In round $t\geq 1$, all nodes activated at round $t-1$ will try to activate their neighbors. An activated node $u$ will remain activated. It has probability $p(u,v)$ to activate each unactivated neighbor $v$ at the next round and it cannot activate any neighbors afterwards. The process stops when no more nodes can be activated. Denote $\mathbb{I}(S)$ as the expected number of nodes activated by the rumor with $S$ as the seed set, where the expectation is taken over all $p(u,v)$. We call $\mathbb{I}(S)$ as the influence spread of seed set $S$. When a problem considers differentiated gain of influencing the nodes, we denote the expected gain over all influenced nodes by $S$ as $\mathbb{I}^c(S)$.

\subsubsection{Interconnection between OSN and the Cellular Network}
The owners of the UEs $\mathcal{D}$ in the cellular network may be users in an OSN, as illustrated in Figure \ref{fig:social-d2d}. As our focus is on the impact of OSNs to D2D communication, we construct a link $e^{vi}$ only if a UE $i\in V^d$ has a corresponding user $v\in V^s$. The collection of all such edges is denoted as $E^{ds}$. We assume that each UE can be related to at most one OSN user and each OSN user owns at most one UE. Otherwise, dummy nodes can be used to recover the one-to-one correspondence. If $e^{vi}\in E^{ds}$ and $i\in V^d$, when $v$ believed in the rumor, UE $i$ will no longer be in $V^d$ and cannot be included in any D2D links. 

\begin{definition}[TMIN]
Consider a cellular system with BS $B$, set of devices $\mathcal{D}$, set of content requesters $R\subset \mathcal{D}$ and the cellular network $G^c=(V^c,E^c)$. Also, consider the OSN $G^s=(V^s,E^s)$ and the correspondence among devices in $V^d$ and users in $V^s$ depicted by the edge set $E^{ds}$. The information propagation is described using the IC model with probability $p_{uv}$ for each $(u,v)\in E^s$. TMIN asks for a seed set in $G^s$ with size at most $k$ to minimize the throughput $T$ in $G^c$.
\end{definition}

Solving TMIN reveals the top-$k$ nodes in the OSN that are critical to the cellular network. Also, comparing the throughput in a scenario that is not impacted by rumors and the one solved by TMIN helps characterizing the degree of damage that rumors can cast to the cellular network.

\section{Solution to TMIN}\label{sec:solution}
In this section, we describe our solution RCF to TMIN by first presenting its overview and then introduce its two subroutines in detail. 
\subsection{The Overview}
Intuitively, the goal for TMIN is to find the most critical nodes in the OSN as seed nodes for the rumor, in order to impact the cellular network throughput the most. However, the OSN alone contains no information about the cellular network. Therefore, we must utilize the cellular network to provide information to the OSN and guide its critical nodes selection. In our solution framework RCF, we first calculate the criticality of all UEs in the cellular network using Alg.~\ref{alg:nce}, based on the solutions for a throughput minimization problem defined in the cellular network. Then, we cast the criticality values of the devices to their corresponding users in the OSN and introduce an efficient targeted IM algorithm (Alg.~\ref{alg:tim}) to find the top-$k$ critical nodes in the OSN. The framework RCF is described in Alg.~\ref{alg:rcf}, while the two subroutines, Alg.~\ref{alg:nce} and Alg.~\ref{alg:tim} are discussed in Sect.~\ref{ssec:D2Donly} and \ref{ssec:bothnetwork}, respectively.

\begin{algorithm}
 	\caption{\textbf{RCF}}
 	\label{alg:rcf}
 	\begin{algorithmic}
 	    \REQUIRE Cellular Network snapshots $G^{c}=(V^{c},E^{c})$, Social network $G^s=(V^s,E^s)$,$k$
 	    \ENSURE Seed set $S\subseteq V^s$
 	    \STATE Calculate criticality values $\mathcal{C} = \{cr_i, |\forall i\in v^d\}$ using Alg.~\ref{alg:nce} with budgets $k,k+1,\cdots,|V^d|$ and project them to corresponding nodes in $V^s$.\\
 	 	\STATE	Solve the seed set $\bar{S}$ by Alg.~\ref{alg:tim}.\\
 	 	\RETURN $\bar{S}$
 	\end{algorithmic}

 \end{algorithm}

\subsection{Criticality Evaluation Scheme}\label{ssec:D2Donly}
As the goal of the cellular network is to maximize its throughput, the criticality of the devices must be related to their contribution in throughput reduction. To characterize criticality, we start from the problem of finding the top-$u$ critical nodes in the cellular network: those whose switching from D2D to cellular mode can minimize the maximum network throughput. With a fixed $u$, we are able to obtain the top-$u$ critical nodes, yet with this piece of information alone we can only assign criticality value $1$ for the top-$u$ nodes and $0$ for all the remaining. Such criticality values can be misleading as the values may change drastically with other $u$ values. Therefore, it is necessary to consider the top-$u$ critical nodes for various values of $u$ and integrate the pieces of information, in order to have a complete view of how critical the devices are. 

In the following, we first propose an approach to find the top-$u$ critical nodes in the cellular network, via solving a bi-level mixed integer linear program (MILP). Then, based on the solution of the MILP, we discuss Alg. \ref{alg:nce}, Node Criticality Evaluation (NCE), to determine node criticality. 

\begin{center}\scriptsize
  \begin{table}[h]
		\caption{Summary of Notations for Section \ref{ssec:D2Donly}}
		\centering
    \begin{tabular}{ | l | l |}
    \hline
    Notation & Description \\		
	\hline			
		$\mathcal{D}$ & Set of mobile devices\\
		$G^c$ & $G^c = (V^c,E^c)$, the cellular network \\
		       &          with all cellular/D2D links \\ 
               $G^d$ & $G^d=(V^d,E^d)$, the D2D network \\
               $I(i,j),I(e)$ & Interference set of edge $e=(i,j)\in E^c$\\ 
		$c(i,j),c_e$ & Capacity of edge $e=(i,j)\in E^c$ \\
		$R^c,R^d$ &  Data content requesters in cellular/D2D mode\\ 
		$N^{+}(i), N^{-}(i)$  & Set of outgoing and incoming links of node $i$\\ 
        	$G^{c'}$ & $G^{c'}=(V^{c'},E^{c'})$, the modified cellular network\\
        	$E^l$ & The set of all cellular links in $G^{c'}$\\
        	$E^{d'}$ & The set of all D2D links in $G^{c'}$\\
        	$E^m$ & $E^m =\{e_i|i\in V^{d}\}$, \\
        	& the set of removable edges in $G^{c'}$\\
        	$f_{ij},f_e$ & Flow rate on edge $e=(i,j)\in E^{c'}$\\
        	$T$ & $T=f_{v^tB}$, the throughput of the network, where \\
            & $B$ denotes the BS and $V^t$ is a virtual sink node\\
        	$z_e$ & Binary variable indicates whether\\
            &attack edge $e$ ($z_e=1$) or not ($z_e=0$)\\
        	$W^c$ & Bandwidth allocated to cellular transmission\\
        	$W^d$ & Bandwidth allocated to D2D transmission\\
        $k$ & Budget of the adversary\\
			
    \hline
    \end{tabular}		
		\label{table:MFINotations}
	\end{table}
\end{center}
                
\subsubsection{Find the Critical Nodes in the Cellular Network}

Under the system model discussed in \ref{ssec:cellularnetwork}, the cellular network can be modeled as a flow network. The capacity and flow of a cellular/D2D link is its maximum data rate and actual data rate, respectively. However, special care is required in the construction, for linearizing the calculation of data rate, and modeling devices switching from cellular to D2D mode, which will be discussed respectively as follows. 

\textbf{Data Rate Estimation.}  In practice, calculating data rate often requires nonlinear formulas considering interference management and resource allocation. However, adding nonlinear constraints to a formulation can greatly increase its complexity. To avoid this, we describe an approach to estimate data rate. As the main purpose of studying TMIN is to demonstrate how cellular network throughput can be impacted by rumors in OSNs, the estimated data rates suffice. 

To estimate data rates, we first discuss the collision management for D2D and cellular links. Links sharing the same band cannot transmit concurrently if they are in the same access domain, in order to avoid collision. We denote such links as an Interference Set. The interference set for a link $e=(i,j)$, $I(e)$, can be defined as $I_e=\{(i',j')|d_{ii'}\leq \beta d_0\}$ where $\beta$ is a tunable parameter and $d_0$ is the maximum D2D communication distance. Notice that the link $e$ itself is also contained in $I(e)$. With the interference sets for all links, we can obtain the set of links that are allowed to transmit at the same time, $L(e)$, given one active link $e$. $L(e)$ is constructed as follows. First we set $L(e)=\{e\}$. Then we iteratively select a link into $L(e)$ if it causes the largest drop in data rate among links that are not in the interference sets of links in $L(e)$. Assume each device has a fixed transmission power, the SINR for each D2D link can therefore be obtained using \eqref{eqn:sinr}. Based on \eqref{eqn:d2drate}, the data rate is proportional to the bandwidth, which is a variable. For notation convenience, we calculate $c(e)$ using \eqref{eqn:d2drate}, assuming unit bandwidth. Therefore, the maximum data rate of link $e$ can be expressed by $W^d\times c(e)$, recall that $W^d$ is the bandwidth assigned to D2D communications. Similarly, we can use \eqref{eqn:cellularrate}, \eqref{eqn:snr} to obtain the maximum data rate for a cellular link $e'$, which is $W^c\times c(e')$. The actual data rate $f(e)$ is modeled as a linear variable upper bounded by the maximum data rate. 

\textbf{Network Modification}
To create a flow network with a single source and a single sink, we introduce additional components, including a global sink node $v^t$ as well as the links $(i,v^t), \forall i\in R$ and $(v^t,B)$. A link among those has infinite capacity and an empty interference set. Next, we model nodes switching from D2D mode to cellular mode. Two types of devices may switch their modes. The first type is the relay devices $V^r$. When such devices turn to cellular mode, we can remove them from the network as they neither request any data nor contribute to D2D transmission. The second type is the receiver devices $R^d$. When such devices switch their modes, they still request data from the BS. So we must remove all D2D links associated with those devices, but keep their cellular links. To model a switch as a link removal, we split each node $i\in V^d$ into two nodes $i^{-},i^{+}$. The incoming D2D links are connected to $i^-$ and all outgoing links are connected to $i^+$. Denote 
$E^{d'} = \{(i^+,j^-)|(i,j)\in V^d\}$ as the set of all D2D links. For the cellular link $(B,i)$, we replace it with $(B,i^-)$ if $i\in V^r$ and $(B,i^+)$ if $i\in R^d$. Let $E^{l} = \{(B,i)|i\in v^c\backslash v^d\}\cup\{(B,i^-)|i\in V^r\}\cup \{(B,i^+)|i\in R^d\}$ as the set of all cellular links. Nodes $i^-,i^+$ are connected by link $e_i=(i^{-},i^{+})$, which has an empty interference set and infinite capacity. Let $E^m=\{e_i|i\in V^{d}\}$. Denote the modified graph as $G^{c'}=(V^{c'},E^{c'})$ where 
\begin{align*}
V^{c'}=(V^c\backslash V^d) \cup \{i^-,i^+|i\in V^d\} \cup \{B,v^t\}
\end{align*}
and
\begin{align*}
E^{c'}= E^l\cup E^{d'}\cup E^m\cup\{(v^t,B)\}\cup \{(i,v^t)|i\in R\}
\end{align*}
We can easily verify that switching a node $i$ from D2D to cellular in $G^c$ is equivalent to removing link $(i^-,i^+)$ in $G^{c'}$. 

With the data rate model and the modified network $G^{c'}$, we formulate the problem as a bi-level MILP $\mathcal{P}$.
\begin{align}\scriptsize
\mathcal{P}:	\min T(\textbf{z})&\label{eq1:outerobj}\\
	s.t.  \sum_{e\in E^m} z_e\leq u\label{eq1:resource}&\\
		z_e\in\{0,1\},& \forall e\in E^m\\
	T(\textbf{z})=\max f_{v^tB}\label{eq1:innerobj}&\\
		s.t.\quad   \sum_{i\in N^{-}(j)}f_{ij}-\sum_{k\in N^{+}(j)}f_{jk}&= 0, \forall j\in V^{c'}\label{eq1:flowbalance}\\
             \sum_{e'\in I(e)}\frac{f_{e'}}{W^c\times c(e')}&\leq 1, \forall e\in E^l\label{eq1:interferencecellular}\\
            \sum_{e'\in I(e)}\frac{f_{e'}}{W^d\times c(e')}&\leq 1, \forall e\in E^{d'}\label{eq1:interferenced2d}\\
			 f_e\leq c(e)(1-z_e),&\quad \forall e\in E^m \label{eq1:removal}\\
             W^c+W^d\leq W \label{eq1:bandwidth}\\
			 f_{e}\geq 0,&\quad \forall e\in E^{c'}\\
             W^c,W^d \geq 0
\end{align}
The objectives \eqref{eq1:outerobj} (outer stage) and \eqref{eq1:innerobj} (inner stage) guarantees the problem to be a minimization of the maximum throughput $T$. Constraint \eqref{eq1:resource} restrict the solution to be top-$u$ critical nodes. The binary variable $z_e$ reaches $1$ if link $e$ is removed. Therefore, at most $u$ links in $E^m$, i.e. at most $u$ devices in $V^d$ can be removed. Constraint \eqref{eq1:flowbalance} is the flow balance constraint. The traffic received by a device must be equal to what it transmits. Collision management in the cellular network are considered in \eqref{eq1:interferencecellular} and \eqref{eq1:interferenced2d}, by which we ensure that only one link among all links in an interference set can transmit at any point of time. Also, since the traffic $f_e$ must be non-negative, \eqref{eq1:interferencecellular} and \eqref{eq1:interferenced2d} upperbound the traffic on any link by its data rate. Constraint \eqref{eq1:removal} model the case that no flow can be assigned to a removed edge. Notice that we omit all links with infinite capacity for the capacity constraints. Constraint \eqref{eq1:bandwidth} limits the total bandwidth being used by cellular and D2D communications. 

As $\mathcal{P}$ cannot be solved directly due to its bi-level structure, we reformulate it as a single-level MILP $\mathcal{P}^d$ by dualization and linearization techniques. The solution of $\mathcal{P}^d$ is attainable via existing MILP solvers.

\textbf{Dualization of the Inner Stage.}
By dualization, we can transform the inner stage to a equivalent minimization problem, so that the original minimax bi-level formulation is equivalent to a minimization problem in only one stage. Denote $p,q_c,q_d, r, l$ as the dual variables corresponds to constraints \eqref{eq1:flowbalance}-\eqref{eq1:bandwidth}, respectively. The reformulated program after dualization is denoted as $\mathcal{P}^d$. 
\begin{align}\scriptsize
	\mathcal{P}^d: \min  W\times l + \sum_{e\in E^m}  c(e)(r_e-\delta_e)&&\\
	 s.t. \sum_{e\in E^m} z_e\leq u&&\\
     \delta_e\leq z_e,&& \forall e\in E^m \label{eqn:linear1}\\
     \delta_e\leq r_e,&& \forall e\in E^m\label{eqn:linear2}\\
     \delta_e \geq r_e-(1-z_e),&& \forall e\in E^m\label{eqn:linear3}\\
	 p_B-p_{v^t}\geq 1 &&\label{eqn:pbvt}\\
     p_{v^t}-p_i \geq 0, && \forall i\in R\label{eqn:pvr}\\
	 p_j-p_i+\sum_{e\in I(i,j)}\frac{q_{e}^c}{c(i,j)}\geq 0,&&  \forall (i,j)\in E^{l}\label{eqn:pcellular}\\
     	 p_j-p_i+\sum_{e\in I(i,j)}\frac{q_{e}^d}{c(i,j)}\geq 0,&&  \forall (i,j)\in E^{d'}\label{eqn:pd2d}\\
	 p_j-p_i+r_e\geq 0,  \forall e=(i,j)\in E^{m}&&\label{eqn:constforr}\\
     l - \sum_{e\in E^l}q^{c}_{e}\geq 0 &&\label{eqn:l1}\\
     l - \sum_{e\in E^{d'}}q^{d}_{e}\geq 0 \label{eqn:l2}&&\\
	 z_e\in\{0,1\},r_e\geq 0, \delta_e\geq 0, && \forall e\in E^m\\
     q^{c}_{e}\geq 0, && \forall e\in E^l\\
     q^{d}_{e}\geq 0, && \forall e\in E^{d'}\\
	 l\geq 0 &&
\end{align}
\textbf{Linearization.} The original dual objective contains a quadratic term $\sum_{e\in E^m} c(e)r_e(1-z_e)$, which largely increases complexity. To achieve a linear formulation, we substitute $r_ez_e$  with $\delta_e$ and add constraints \eqref{eqn:linear1} to \eqref{eqn:linear3}. The equivalence after the step is proved in Lemma \ref{lemma:linearization}. 
\begin{lemma}\label{lemma:linearization}
	\eqref{eqn:linear1} to \eqref{eqn:linear3} guarantee $\delta_e=r_ez_e, \forall e\in E^m$.
\end{lemma}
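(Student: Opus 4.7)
The plan is a direct case analysis on the binary value of $z_e$, done independently for each $e\in E^m$. I would use the three linearization inequalities \eqref{eqn:linear1}--\eqref{eqn:linear3} together with the sign constraints $\delta_e\ge 0$ and $r_e\ge 0$ already imposed on the dual variables. Since $z_e\in\{0,1\}$, the two branches $z_e=0$ and $z_e=1$ are exhaustive, so it suffices to verify the identity $\delta_e=r_ez_e$ in each.

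In the branch $z_e=0$, constraint \eqref{eqn:linear1} collapses to $\delta_e\le 0$; combined with $\delta_e\ge 0$ this pins $\delta_e=0$, which matches $r_ez_e=r_e\cdot 0$. Constraints \eqref{eqn:linear2} and \eqref{eqn:linear3} play no role here: the former is trivial and the latter reads $\delta_e\ge r_e-1$, which is automatic whenever $r_e\le 1$.

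In the branch $z_e=1$, constraint \eqref{eqn:linear3} simplifies to $\delta_e\ge r_e-(1-1)=r_e$, while \eqref{eqn:linear2} reads $\delta_e\le r_e$; sandwiching these gives $\delta_e=r_e=r_e\cdot 1=r_ez_e$.

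The hard part is essentially nothing: this is the textbook big-$M$ trick for linearizing the product of a binary variable and a nonnegative continuous variable, with \eqref{eqn:linear1} simultaneously serving as the big-$M$ cap at $M=1$. The one subtlety I would flag when writing up the formal proof is that this cap implicitly requires $r_e\le 1$ at feasibility (otherwise \eqref{eqn:linear1} and \eqref{eqn:linear3} are jointly infeasible when $z_e=1$); this is consistent with how $r_e$ enters the surrounding dual program, but is the sole point worth double-checking before the two-line case split is declared complete.
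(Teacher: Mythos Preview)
Your approach is correct and essentially identical to the paper's: a two-case split on $z_e\in\{0,1\}$ using \eqref{eqn:linear1}--\eqref{eqn:linear3} together with $\delta_e\ge 0$. The only difference is that the paper does not leave the bound $r_e\le 1$ as a point to ``double-check'' but actually proves it up front, arguing from constraint~\eqref{eqn:constforr} that at optimum $r_e=\max\{0,p_i-p_j\}$ and from \eqref{eqn:pbvt}--\eqref{eqn:pd2d} that $p_i-p_j\le 1$ (else $l$ would be unnecessarily large); once $r_e\in[0,1]$ is in hand, the case split is exactly yours.
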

\begin{proof}
First, we claim that the range of any $r_e$ is $[0,1]$. Based on the objective, both $l$ and $r_e$ should be minimized. As the only constraint that potentially requests an $r_e$ value be higher than $0$ is inequality \eqref{eqn:constforr}, the value of $r_e$ can be written as $\max \{0, p_i-p_j\}$ for $e=(i,j)\in E^m$. Based on constraints \eqref{eqn:pbvt} to \eqref{eqn:pd2d}, the largest value for any $p_i-p_j$ is 1, otherwise the $l$ value will be unnecessarily large and the solution is not optimal. Therefore, $0\leq r_e\leq 1$ for all $e\in E^m$. By constraints \eqref{eqn:linear1} to \eqref{eqn:linear3}, when $z_e=0$, $\delta_e=0=z_er_e$; when $z_e=1$, $\delta_e=r_e=z_er_e$(as $r_e\leq 1=z_e$). As $z_e$ is binary, we have $\delta_e=z_er_e$ in all cases.
\end{proof}

\subsubsection{The NCE Algorithm}
By the discussion at the beginning of Sect.~\ref{ssec:D2Donly}, we can use the solutions of $\mathcal{P}^{d}$ to support criticality evaluation as in Alg.~\ref{alg:nce}. The program $\mathcal{P}^{d}$ is solved under a set $\mathcal{U}=\{u_1,u_2,...u_w\}$ of budgets. Denote the solutions as $\textbf{z}^1$,...,$\textbf{z}^w$, we define the criticality of node $i$ as
\begin{align*}
cr_i=\sum_{p=1,...,w}z^{p}_{e_i}
\end{align*}
The definition of $cr_i$ is aligned with the concept of being critical. A device $i$ that appears in more $\mathcal{P}^{d}$ solutions has higher criticality than another device $j$ that contributes to less $\mathcal{P}^{d}$ solutions. Intuitively, all users with no devices in the considered cellular network have $0$ criticality. 

\begin{algorithm}
 	\caption{\textbf{Node criticality Evaluation(NCE)}}
    \label{alg:nce}
    \begin{algorithmic}
        \REQUIRE Cellular network $G^{c}=(V^{c},E^{c})$, List of budgets $\mathcal{U}=\{u_1,u_2,...,u_w\}$
        \ENSURE $\mathcal{C}=\{cr_i| \forall i \in V^d\}$
     	\STATE Initialize $cr_i=0, \forall i\in V^{d}$\\
        \FORALL{$u_p\in \mathcal{U}$}
        
        \STATE Solve $\mathcal{P}^{d}$ with budget $u_p$ in $G^{c}$.\\
        \STATE Denote the solution as \textbf{z}.\\
        \STATE $cr_i+=z_{e_i}, \forall i\in V^d$.
        \ENDFOR 
     	\RETURN $\mathcal{C}$
    \end{algorithmic}
 \end{algorithm}

We assume the availability of the full knowledge of the cellular network in this paper. In reality, a BS covering a business region may observe some frequently reappearing location patterns or mobility traces of the UEs in the form of snapshots, or static networks. Therefore, if an adversary wants to exploit the vulnerability of the interdependent cellular and social networks, s/he can calculate the criticality of the devices using Alg.~\ref{alg:nce} in all snapshots. However, how an adversary attacks the networks in reality is out of the scope of this paper, as our aim is to understand the vulnerability in the worst case.

\subsection{Targeted IM Algorithm}\label{ssec:bothnetwork}
With Alg. \ref{alg:nce}, we can quantitatively evaluate the criticality of all nodes in the OSN. Due to the way the criticality values are assigned, when more nodes with large criticality values in the OSN are influenced, we can expect a more severe throughput reduction in the cellular network. Therefore, the problem of finding the top-$k$ critical nodes in the OSN can be interpreted as the targeted-IM problem of finding $k$ seed nodes to maximize the total criticality of all influenced nodes. In the following, we first propose an efficient Targeted-IM algorithm based on the reverse influence sampling (RIS) technique and then prove its guaranteed $(1-\frac{1}{e}-\epsilon)$ approximation ratio and near-linear running time.

\begin{center}
  \begin{table}[h]
		\caption{Summary of Notations for Section \ref{ssec:bothnetwork}}
		\centering
    \begin{tabular}{ | l | l |}
    \hline
    Notation & Description \\		
	\hline			
		$G^s$ & $G^s=(V^s,E^s)$, the social network \\ 
        	$\mathcal{C}$ & $\mathcal{C}=\{cr_i| \forall i \in V^s\}$, \\
        	& criticality of all users \\ 
		$\mathbb{I}^c(S)$ & Expected sum of criticality over all the \\
		& nodes influenced by $S$  \\ 
		$S^{*}$ & The optimal seed set \\ 
        	$\mathcal{R}$ & The collection of all RR sets.\\
	 $\deg_{\mathcal{R}}(S)$  & \#  of hyperedges incident to nodes in $S$ \\& among $\mathcal{R}$\\ 
        $\Omega$ & $\Omega = \sum_{v\in V^s} cr_v$\\
        $\bar{\mathbb{I}}^c(S)$ & Estimator for $\mathbb{I}^c(S)$. $\bar{\mathbb{I}}^c(S)=\frac{\deg_{\mathcal{R}}(S)}{|\mathcal{R}|}\Omega$ \\
        	$\epsilon,\delta$ & Precision parameters of Alg. \ref{alg:tim}\\
		$\tau$ & $\tau=\sqrt{\ln(\frac{2}{\delta})}$\\
		$\sigma$ & $\sigma=\sqrt{(1-\frac{1}{e})(\ln{\binom{n}{k}+\ln\frac{2}{\delta}})}$\\
		$\phi$ & $\phi = \frac{(1-1/e)\sigma+\tau}{\epsilon}$ \\
		$\gamma$ & Stopping criterion for generating RR \\
		& sets, $\gamma=2(\phi^2 + \log \frac{2}{\delta})$ \\  
		$\bar{S}$ & Seed set outputted by Alg. \ref{alg:tim}\\
			
    \hline
    \end{tabular}		
		\label{table:bothnetwork}
	\end{table}
\end{center}

\textbf{Brief Review of the RIS Technique.} RIS was first proposed in \cite{borgs14reverse}, which casts light on efficient IM algorithms. It first samples Reverse Reachable (RR) sets and then apply a greedy maximum coverage (MC) algorithm to obtain the seed set. An RR set $\tilde{R}$ consists of the set of nodes that can reach an origin node $o_{\tilde{R}}$ in a sample graph $\tilde{G}$. With enough RR sets, one can estimate the influence each seed set can have to the whole network, by relating the seed set and the RR sets to a coverage instance, in which an RR set corresponds to an element and a node in the graph corresponds to a set. An element is covered by a set if and only if the node exists in the RR set. Approximately, the influence spread of a seed set is positively correlated to the number of covered RR sets over the total number of RR sets, greedily solving the MC can output a seed set with near-optimal influence spread. However, to establish an accurate estimation of influence spread of each seed set without exert too much burden to computation, the number of RR sets generated must be carefully selected. 

In our scenario, as the nodes are of different criticality values, the probability of starting a random RR set from node is $\frac{cr_{v}}{\Omega}$, proportional to its criticality values. Denote $\deg_{\mathcal{R}}(S)$ as the number of RR sets in $\mathcal{R}$ covered by seed set $S$. 

To limit the number of RR sets, we set exponential check points during RR set generation. At each check point, we greedily solve the weighted MC problem \cite{vazirani2013approximation} and examine if $\deg_{\mathcal{R}}(\bar{S})$ exceeds a threshold, where $\bar{S}$ is the solution by the greedy MC. If so, we stop RR set generation and output $\bar{S}$ as the seed set. The Targeted-IM algorithm is presented in Alg. \ref{alg:tim}. With an improved threshold, the time complexity of Targeted-IM is a constant smaller than the state-of-art target IM algorithm BCT \cite{nguyen2016targeted}.

 \begin{algorithm}
 	\caption{\textbf{Targeted-IM}}
    \label{alg:tim}
    \begin{algorithmic}
        \REQUIRE	Social network $G^s=(V^s,E^s)$,with criticality values on nodes, seed set size $k$, Precision parameters $\epsilon>0$, $\delta\in(0,1)$, $\gamma$
        \ENSURE Seed set $S\subseteq V^s$
       	\STATE Collection of RR sets $\mathcal{R} = \emptyset$
        \STATE $N_R=\gamma,N_{R}^{0}=1$
     	\WHILE{$\deg_{\mathcal{R}}(\bar{S})<\gamma$}
        	\FOR{$i=N_{R}^{0}$ to $N_R$}
           		\STATE Generate RR set $\tilde{R}$ by BSA in \cite{nguyen2016targeted}.
    			\STATE	$\mathcal{R}=\mathcal{R}\cup \{\tilde{R}\}$
            \ENDFOR
            \STATE $\bar{S}=GreedyMC(\mathcal{R},k)$
            \STATE　$N_{R}^{0}=N_R,N_R=2N_R$
     	\ENDWHILE
     	\RETURN $\bar{S}$
    \end{algorithmic}
 \end{algorithm}

In the following, we present the approximation ratio and time complexity of Alg. \ref{alg:tim}.  For conciseness, the proof for Theorem \ref{theorem:timratio} is placed in appendix. 
\begin{theorem}\label{theorem:timratio}
	With $\epsilon > 0$, $0<\delta < 1$, the seed set $\bar{S}$ calculated by Alg. \ref{alg:tim} satisfies
		$\mathbb{I}^c(\bar{S})\geq (1-\frac{1}{e}-\epsilon)\mathbb{I}^c(S^*)$
	with probability at least $1-\delta$.
\end{theorem}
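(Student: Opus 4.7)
The plan is to follow the RIS-style analysis of \cite{Tang15,nguyen2016targeted}, adapted to the criticality-weighted origin distribution. The argument rests on three ingredients: an unbiased estimator, concentration via Chernoff--Hoeffding, and the $(1-\tfrac{1}{e})$ guarantee of greedy max-coverage on the sampled instance. First I would verify that $\bar{\mathbb{I}}^c(S)=\frac{\deg_{\mathcal R}(S)}{|\mathcal R|}\Omega$ is an unbiased estimator of $\mathbb{I}^c(S)$: because each origin is drawn with probability $cr_v/\Omega$ and BSA produces a reverse-reachable sample conditioned on that origin, a short computation gives $\Pr[S\cap\tilde R\neq\emptyset]=\mathbb{I}^c(S)/\Omega$, so $\deg_{\mathcal R}(S)$ is a sum of $|\mathcal R|$ i.i.d.\ Bernoullis with mean $\mathbb{I}^c(S)/\Omega$.

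Second, I would define two bad events whose complements together imply the theorem. Event $E_1$: $\bar{\mathbb{I}}^c(\bar S)>\mathbb{I}^c(\bar S)+\tfrac{\tau}{\phi}\mathbb{I}^c(S^\ast)$ (the estimator overshoots the algorithm's output). Event $E_2$: $\bar{\mathbb{I}}^c(S^\ast)<(1-\tfrac{\sigma}{\phi})\mathbb{I}^c(S^\ast)$ (it undershoots the optimum). Controlling $E_2$ uses a multiplicative Chernoff bound combined with a union bound over the $\binom{n}{k}$ candidate $k$-subsets, which is exactly what the $\ln\binom{n}{k}$ term in $\sigma=\sqrt{(1-\tfrac{1}{e})(\ln\binom{n}{k}+\ln\tfrac{2}{\delta})}$ is designed to absorb; this yields $\Pr[E_2]\leq\delta/2$. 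Controlling $E_1$ uses the termination condition $\deg_{\mathcal R}(\bar S)\geq\gamma=2(\phi^2+\log\tfrac{2}{\delta})$, which lower-bounds the empirical coverage enough that a Chernoff tail bound produces $\Pr[E_1]\leq\delta/2$.

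Third, conditioned on $\overline{E_1\cup E_2}$, I chain:
\begin{align*}
\mathbb{I}^c(\bar S)
&\geq \bar{\mathbb{I}}^c(\bar S)-\tfrac{\tau}{\phi}\mathbb{I}^c(S^\ast)\\
&\geq (1-\tfrac{1}{e})\bar{\mathbb{I}}^c(S^\ast)-\tfrac{\tau}{\phi}\mathbb{I}^c(S^\ast)\\
&\geq (1-\tfrac{1}{e})\mathbb{I}^c(S^\ast)-\tfrac{(1-1/e)\sigma+\tau}{\phi}\mathbb{I}^c(S^\ast)\\
&=(1-\tfrac{1}{e}-\epsilon)\mathbb{I}^c(S^\ast),
\end{align*}
where the second line invokes the greedy-MC guarantee applied to the sampled coverage instance, and the last equality uses the definition $\phi=((1-\tfrac{1}{e})\sigma+\tau)/\epsilon$. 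A union bound over $E_1,E_2$ then gives the claimed probability $1-\delta$.

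The main obstacle is the adaptive character of the sampling loop: $|\mathcal R|$ is a random stopping time produced by the exponential-doubling check points and $\bar S$ itself depends on $\mathcal R$, so the Chernoff bound for $E_1$ cannot be applied at a deterministic sample size. I would resolve this by taking an additional union bound over the at most $O(\log|\mathcal R|_{\max})$ check points, an overhead that is absorbed into the additive $\log\tfrac{2}{\delta}$ term baked into $\gamma$; equivalently, one views $\gamma=2(\phi^2+\log\tfrac{2}{\delta})$ as calibrated so that a single Chernoff deviation bound applied at the termination round suffices after this absorption. The same calibration, combined with $\mathbb{E}[|\mathcal R|]=O\!\bigl((\ln\binom{n}{k}+\ln\tfrac{1}{\delta})\,\Omega/(\epsilon^2\,\mathbb{I}^c(S^\ast))\bigr)$, yields the near-linear running time.
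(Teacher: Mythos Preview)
Your overall architecture---unbiased estimator, two deviation events, greedy chaining---is the right shape, but the assignment of the $\binom{n}{k}$ union bound is inverted, and this is a genuine gap. Event $E_2$ concerns the single, fixed set $S^\ast$; one Chernoff bound suffices and no union over $\binom{n}{k}$ subsets is needed there. It is $E_1$, the overshoot of $\bar{\mathbb{I}}^c(\bar S)$, that requires the union bound, because $\bar S$ is produced by greedy on the random sample and could in principle be any size-$k$ set. With your current pairing, the constant $\tau=\sqrt{\ln(2/\delta)}$ attached to $E_1$ is too small to survive a union over $\binom{n}{k}$ candidates, so $\Pr[E_1]\le\delta/2$ does not follow from the termination condition alone. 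Swapping the roles of $\sigma$ and $\tau$ in your two events (and correspondingly in the chain) repairs the argument; the algebra in the final display then matches the definition $\phi=((1-1/e)\sigma+\tau)/\epsilon$ with the $\binom{n}{k}$ term landing on the $\bar S$ side, as in \cite{Tang15}.

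It is also worth noting that the paper's own proof takes a structurally different decomposition. Rather than bounding two deviation events directly, it (i) invokes as a black box the lemma from \cite{Tang15} that $|\mathcal R|\ge Q^\ast=2|V^s|\phi^2/\mathbb{I}^c(S^\ast)$ samples already yield the $(1-1/e-\epsilon)$ guarantee with probability $1-2\delta/3$, and then (ii) uses the stopping rule $\deg_{\mathcal R}(\bar S)\ge\gamma$ together with a martingale tail bound to show $\Pr[|\mathcal R|<Q^\ast]\le\delta/3$. This three-way split (two events inside the cited lemma plus the ``not enough samples'' event) sidesteps the adaptive-$\bar S$ issue you flag at the end, because once $|\mathcal R|\ge Q^\ast$ is established the cited lemma handles the union bound internally at a deterministic sample size. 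Your checkpoint-union-bound idea is a reasonable alternative, but you would still need the $\binom{n}{k}$ factor on the $\bar S$ side for it to close.
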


\begin{theorem}\label{theorem:timtime}
	The expected running time of Alg. \ref{alg:tim} is $O(\gamma|E^s|)$.
\end{theorem}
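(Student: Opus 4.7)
The plan is to factor the expected running time as (expected number of RR sets generated at termination) $\times$ (expected work to generate a single RR set), bound the two factors separately, and observe via a geometric-series argument that the cost of the GreedyMC calls and of the earlier doubling iterations is of the same order, since the per-iteration cost is linear in the current $|\mathcal{R}|$ (plus an $O(k)$ factor for GreedyMC that is absorbed).

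For the expected per-RR-set cost, I would use the standard RIS bookkeeping: BSA started from an origin $o$ (drawn with probability $cr_o/\Omega$) examines exactly the in-edges of every node it discovers, so the expected work equals $\sum_v |N^-(v)|\,\Pr[v\in\tilde R]$. Swapping the order of the sums over $o$ and $v$ and using the IC live-edge duality gives the identity $\Pr[v\in\tilde R]=\mathbb{I}^c(\{v\})/\Omega$, hence
\[
\mathbb{E}[\text{work per RR set}] \;=\; \frac{1}{\Omega}\sum_v |N^-(v)|\,\mathbb{I}^c(\{v\}).
\]

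For the expected number of RR sets at termination, I would combine the stopping rule $\deg_{\mathcal{R}}(\bar S)\geq\gamma$ with the unbiasedness $\mathbb{E}[\deg_{\mathcal{R}}(S)]=|\mathcal{R}|\,\mathbb{I}^c(S)/\Omega$ valid for any fixed $S$, and apply a Chernoff bound with a union bound over the $O(\log|\mathcal{R}|)$ check points and the $\binom{n}{k}$ candidate seed sets (of the same flavor as the one behind Theorem~\ref{theorem:timratio}) to conclude $\mathbb{E}[|\mathcal{R}|]=O(\gamma\Omega/\mathbb{I}^c(\bar S))$. Plugging in the guarantee $\mathbb{I}^c(\bar S)\geq(1-1/e-\epsilon)\mathbb{I}^c(S^*)\geq(1-1/e-\epsilon)\max_v\mathbb{I}^c(\{v\})$ from Theorem~\ref{theorem:timratio} and multiplying the two factors yields
\[
\mathbb{E}[|\mathcal{R}|]\cdot \mathbb{E}[\text{work per RR set}] \;=\; O\!\left(\gamma\sum_v |N^-(v)|\,\frac{\mathbb{I}^c(\{v\})}{\max_u\mathbb{I}^c(\{u\})}\right) \;\leq\; O(\gamma|E^s|),
\]
since each ratio is at most $1$ and $\sum_v |N^-(v)|=|E^s|$.

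The main obstacle I anticipate is turning the per-fixed-$S$ concentration into the claimed bound on $\mathbb{E}[|\mathcal{R}|]$: because $\bar S$ is recomputed from the current sample at every check point, the stopping time is data-dependent, so a naive Markov bound on $\deg_{\mathcal{R}}(\bar S)$ does not yield $O(\gamma\Omega/\mathbb{I}^c(\bar S))$. Rigorously, this requires a tail inequality that controls $\deg_{\mathcal{R}}(S)$ simultaneously across all candidate seed sets $S$ and all check points, which is essentially the same machinery already deployed for Theorem~\ref{theorem:timratio}; I would invoke it rather than rederive it.
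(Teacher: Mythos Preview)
The paper does not actually prove this theorem: it states only that ``the proof is a straightforward extension of the one in \cite{nguyen2016targeted}, and is omitted.'' Your proposal is precisely that standard RIS running-time analysis---factoring into (expected number of RR sets)$\times$(expected per-sample work), using the identity $\Pr[v\in\tilde R]=\mathbb{I}^c(\{v\})/\Omega$ for the latter, a concentration-plus-stopping-rule argument for the former, and the ratio trick $\mathbb{I}^c(\{v\})/\max_u\mathbb{I}^c(\{u\})\le 1$ to collapse the product to $O(\gamma|E^s|)$---so your approach is correct and coincides with what the cited reference does.
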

The proof is a straightforward extension of the one in \cite{nguyen2016targeted}, and is omitted.

\section{Experiments}\label{sec:experiment}

In this section, we evaluate how cellular network performance can be impacted when the critical nodes in the OSN propagates the rumor. In the following, we first discuss the setup for the experiments in Section \ref{ssec:setup}. In Section \ref{ssec:misinfo}, we measure the vulnerability of the cellular network. Then we consider how the actions of the users and the BS can reduce the vulnerability in Section \ref{ssec:awareness} and \ref{ssec: retention}, respectively.

\subsection{Setup}\label{ssec:setup} 

\textbf{Cellular Network.} As the real data sets are limited due to our requirements, we choose to use synthetic data for the cellular network. In a square $50$m by $50$m area, we generate $50$ data requesters at random locations. Further, we select $30$ of them to be in D2D mode and the other $20$ remain in cellular mode. To enable D2D communication, we also generate $60$ D2D relay devices at random locations within the area.  All the wireless parameters are summarized in Table~\ref{table:wirelessparam}. 


\textbf{OSN.} To model information propagation in the OSN, we use real-world Facebook network topology from \cite{snap}. The Facebook data consists of 4039 users and 88234 social network interactions between the users \cite{mcauley2012learning}. The influence between a pair of users was assigned randomly from a uniform distribution.
\begin{table}
	\caption{Wireless Network Parameters} 
	\begin{tabular}{|c|c|}
		\hline Notation & Description \\
        \hline Cell dimension & $50$ x $50$ $m^2$ \\ 
        \hline BS position  & Lower left corner \\ 
        \hline Network bandwidth & $0.1$ MHz \\ 
		\hline Channel Model & Multipath Rayleigh
fading \\ 
        \hline Path Loss Exponent & $3$ \\ 
        \hline Noise spectral density & $-174$ dBm/Hz \\ 
		\hline BS transmit power& $100$ W \\ 
		\hline D2D transmit power & $10$ W \\ 
        \hline D2D distance & $15$ m\\
		\hline
	\end{tabular}

	\label{table:wirelessparam}
\end{table}

\textbf{Interdependency.} For the interconnection between the two networks, we consider two scenarios, namely a stadium scenario and a shopping mall scenario. In the stadium scenario, the owners of the devices are not very likely to be socially connected. In the shopping mall scenario, however, as the shoppers normally resides closer than the game goers, the owners of the devices are more likely to be connected in the OSN, compared with the stadium scenario. 

We use the following procedure to create the interconnections. For each D2D node $i\in V^d$, we construct link $e^{vi}$ between $i$ and a uniformly randomly chosen user $v\in V^s$ from the OSN. Once $e^{vi}$ is constructed, with probability $p_1$, we uniformly randomly choose a neighbor $i'$ of $i$ and a neighbor $v'$ of $v$ and construct link $e^{v'i'}$. With probability $p_2$, we construct link $e^{\bar{v}\bar{i}}$ for a random neighbor $\bar{i}$ of $i'$ and a random neighbor $\bar{v}$ of $v'$, when $e^{v'i'}$ is constructed. In order to differentiate the different level of social connections among users in the cell area, we apply higher $p_1,p_2$ values for the shopping mall scenario. In our experiments we have set $p_1=0.7$ and $p_2=0.4$ for the stadium scenario, and $p_1=0.9$ and $p_2=0.6$ for the shopping mall scenario.

\textbf{Algorithms.} To illustrate the efficacy of our proposed method RCF, especially the criticalness evaluation scheme, we compare RCF with two approaches. The first approach uses degree centrality as criticalness values and runs the same targeted IM algorithm embedded in RCF to obtain the critical nodes in the OSN. The second approach, as a baseline, randomly assigns criticalness values to guide critical nodes selection in the OSN. 


\subsection{Vulnerability of the Cellular Network}\label{ssec:misinfo}
In this section, we study how vulnerable the cellular network is to rumors in the OSN, in terms of throughput reduction. Also, we demonstrate that social connectivity of the nodes in the cellular network can impact the vulnerability. Additionally, we explore another dimension of vulnerability: how much bandwidth is necessary to bring the throughput back to the original value. 
\begin{figure}[!ht] \centering
\caption{Throughput Reduction}
\subfloat[Stadium Scenario]{
 	\includegraphics[width=0.49\linewidth]{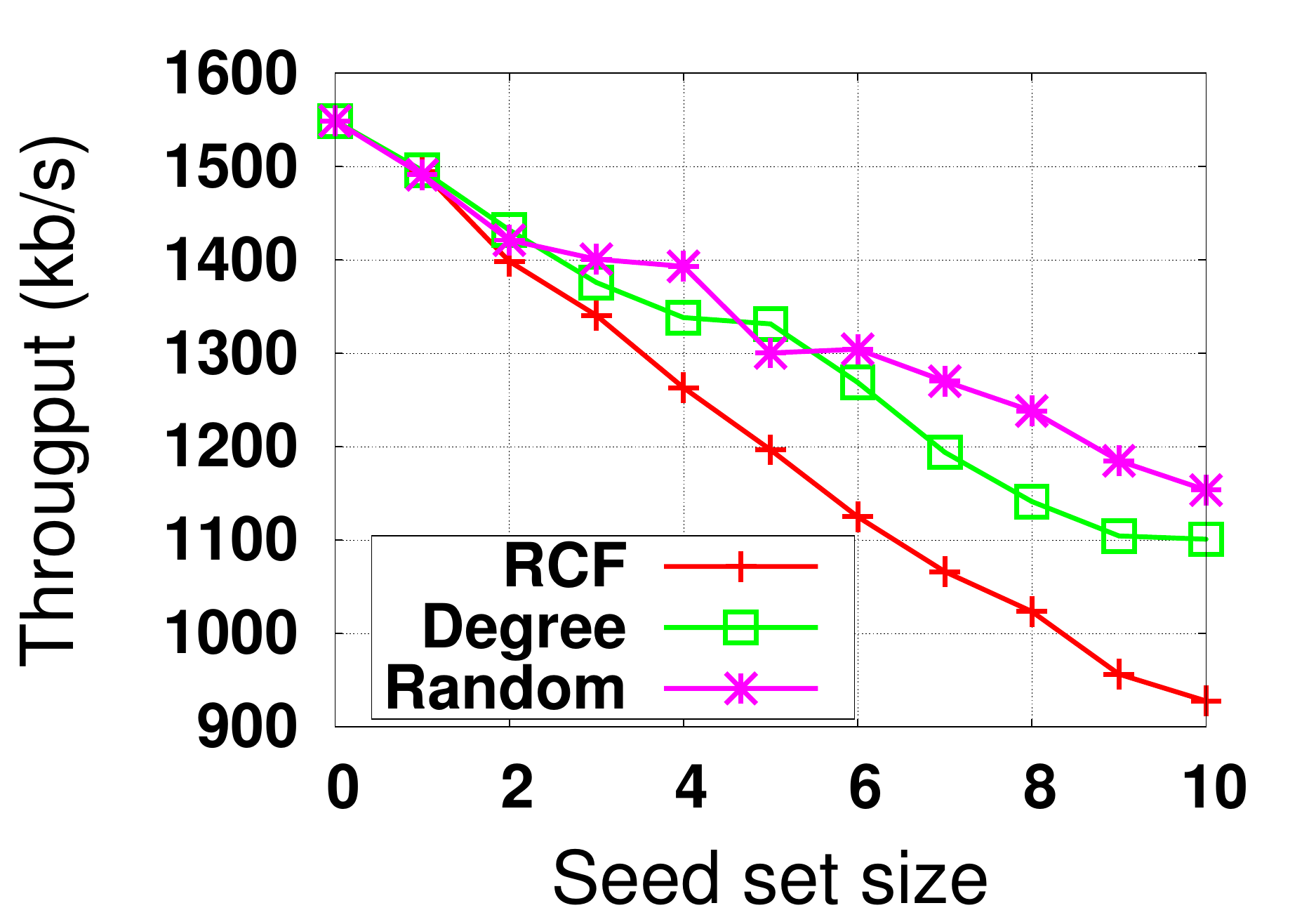}
 	\label{fig:thr}}
\subfloat[Shopping Mall Scenario]{
 	\includegraphics[width=0.49\linewidth]{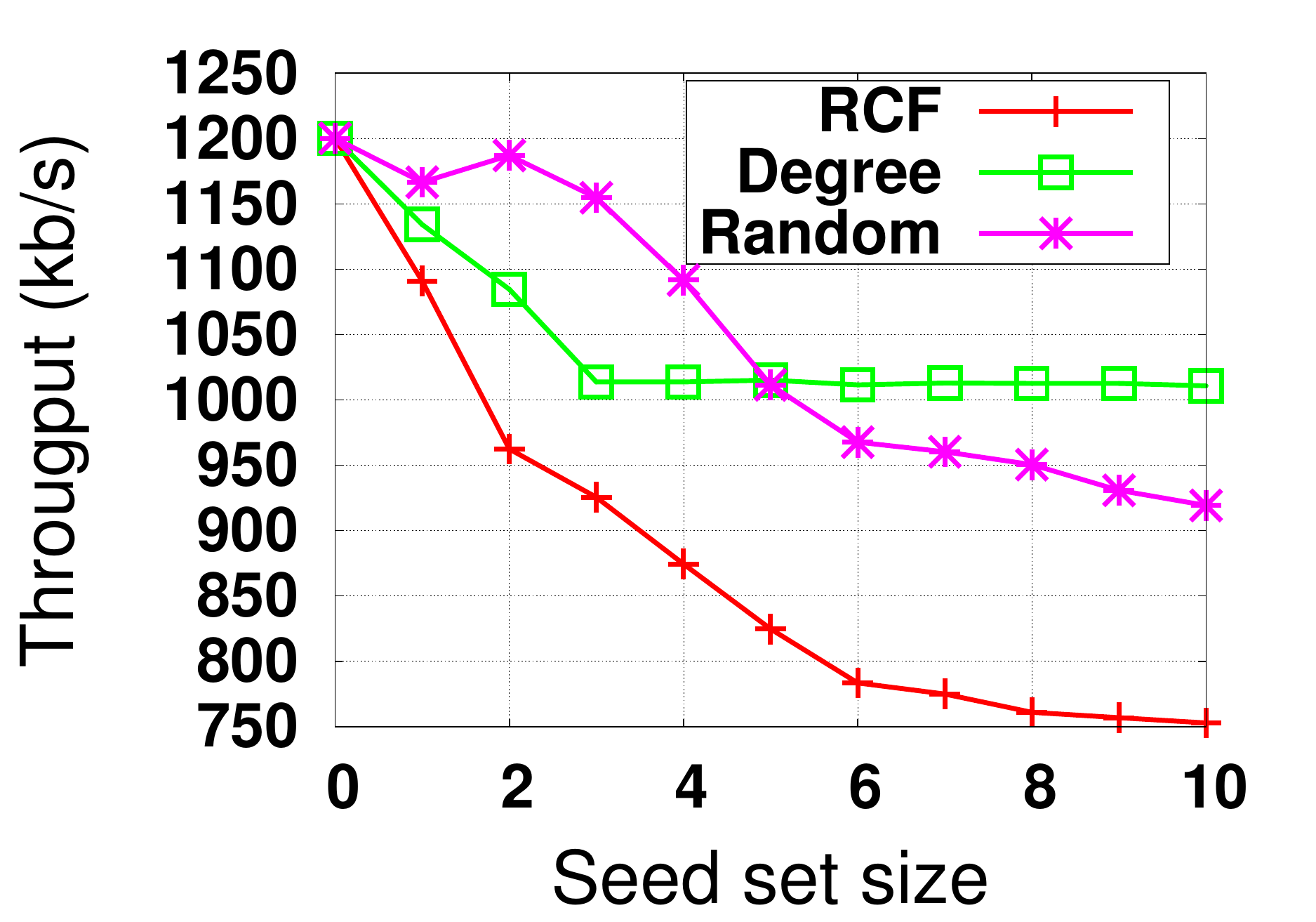}
 	\label{fig:user_affected}}

    \label{fig:throughputdec}
 \end{figure}
 
We first observe how the throughput can be impacted with varying seed set size of spreading rumors in the OSN. In Figure \ref{fig:throughputdec}, for both scenarios, the throughput of the cellular network experiences substantial decreases after the rumor propagation. The decrements are higher with increasing number of seeds (i.e. critical nodes). With RCF, the damage to throughput is severer than other approaches, which means that RCF can better reveal the critical nodes. The gap between RCF and the other methods increases significantly as more seed nodes can be selected. In Figure \ref{fig:thr}, the throughput drops $40\%$ when $10$ critical nodes outputted by RCF are used as seeds of rumor. It is worth noticing that even the random approach can lead to noticeable decrease in throughput. Therefore, the interdependent cellular/social networks is vulnerable not only to rumors that are intentionally spread, but also to those that are randomly originated.

\begin{figure*}[!ht] 
\begin{minipage}[t]{0.31\linewidth}
\centering
\includegraphics[width=1\linewidth]{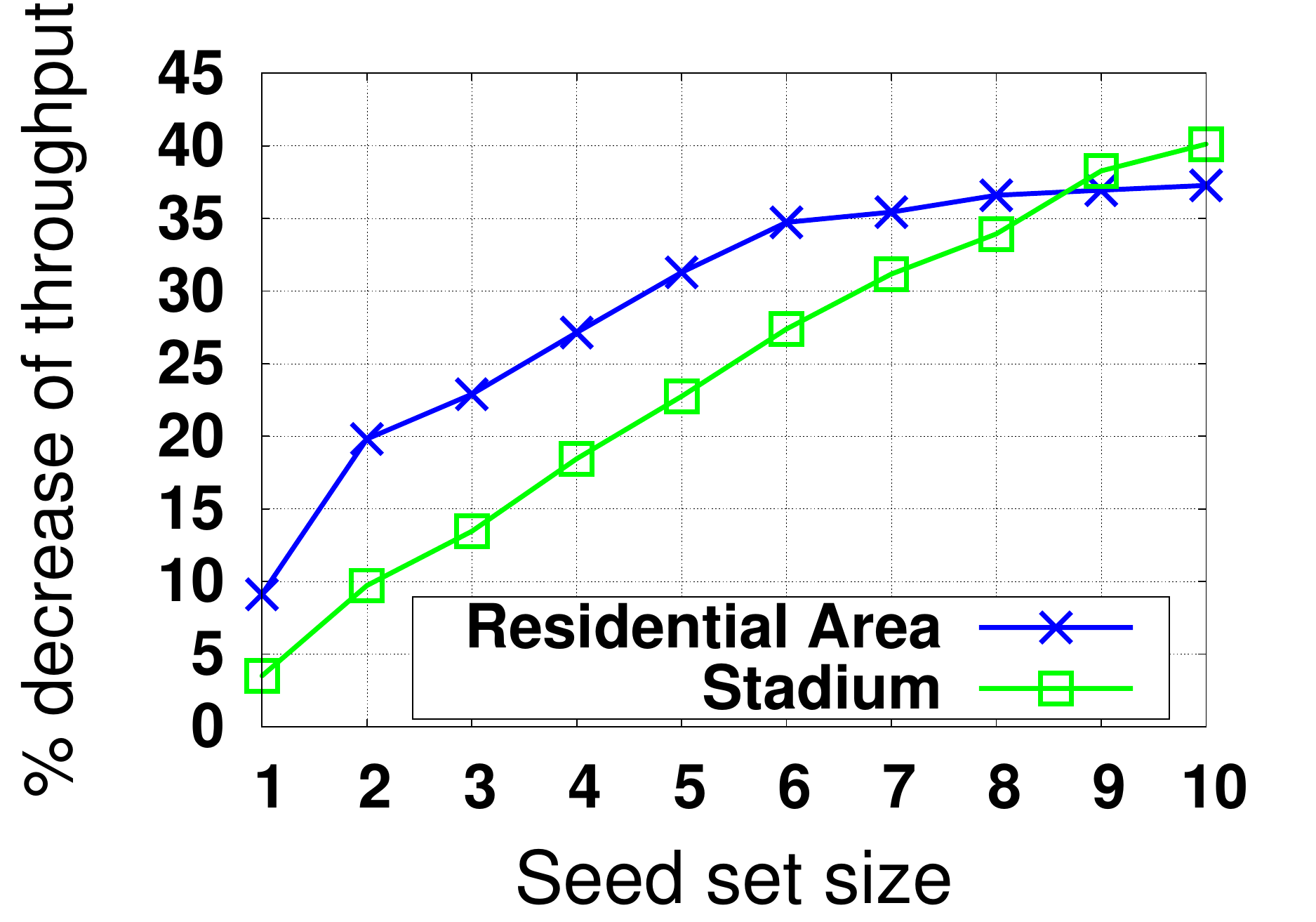}
\caption{Stadium versus Shopping Mall: higher social connectivity results in faster drop of throughput}
\label{fig:res}
\end{minipage}
\hfill
\begin{minipage}[t]{0.31\linewidth}
\centering
 	\includegraphics[width=1\linewidth]{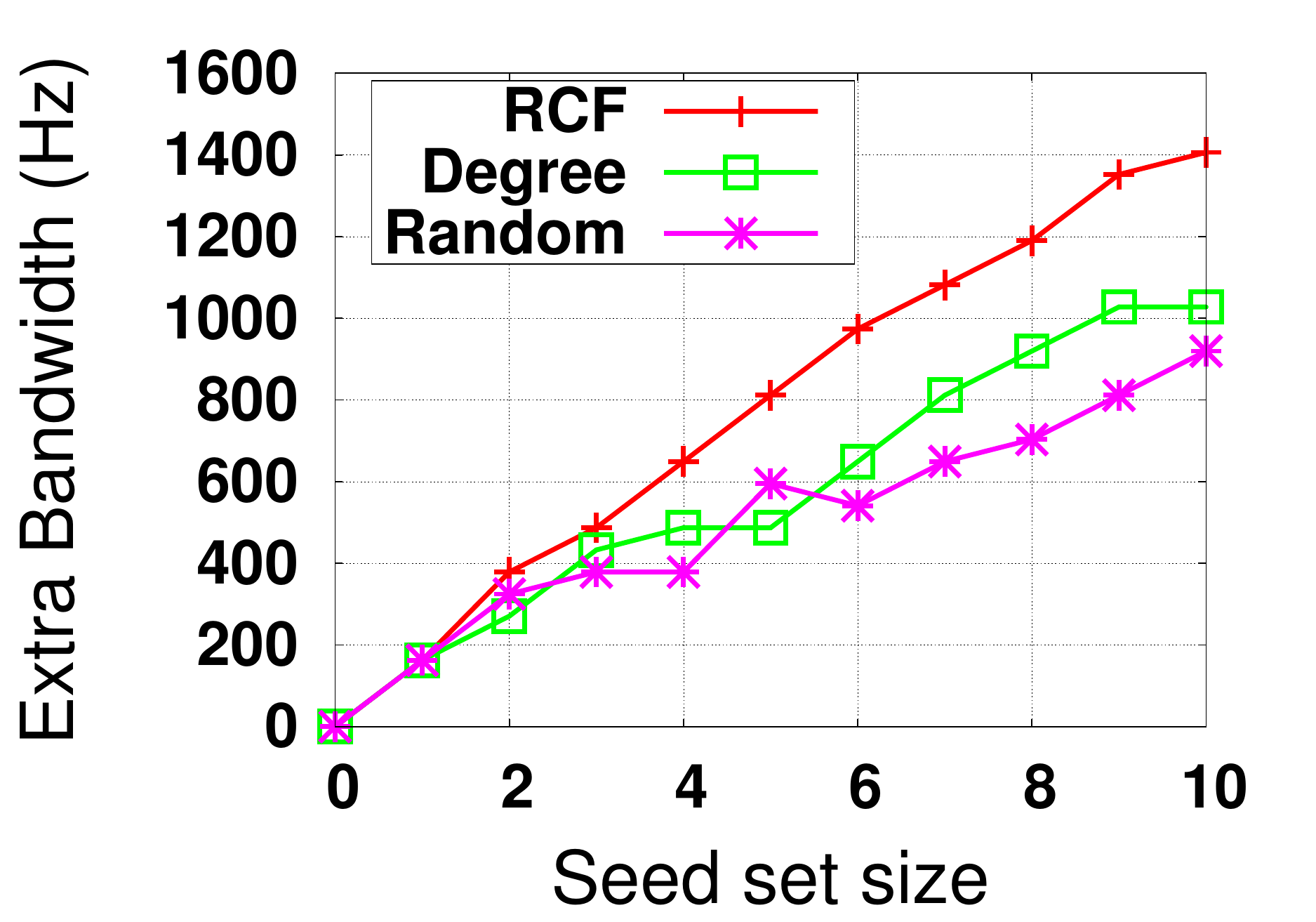}
 	\caption{Extra bandwidth requirement for the stadium scenario}
 	\label{fig:new_resource}
\end{minipage}
\hfill
\begin{minipage}[t]{0.31\linewidth}
\centering
 	\includegraphics[width=1\linewidth]{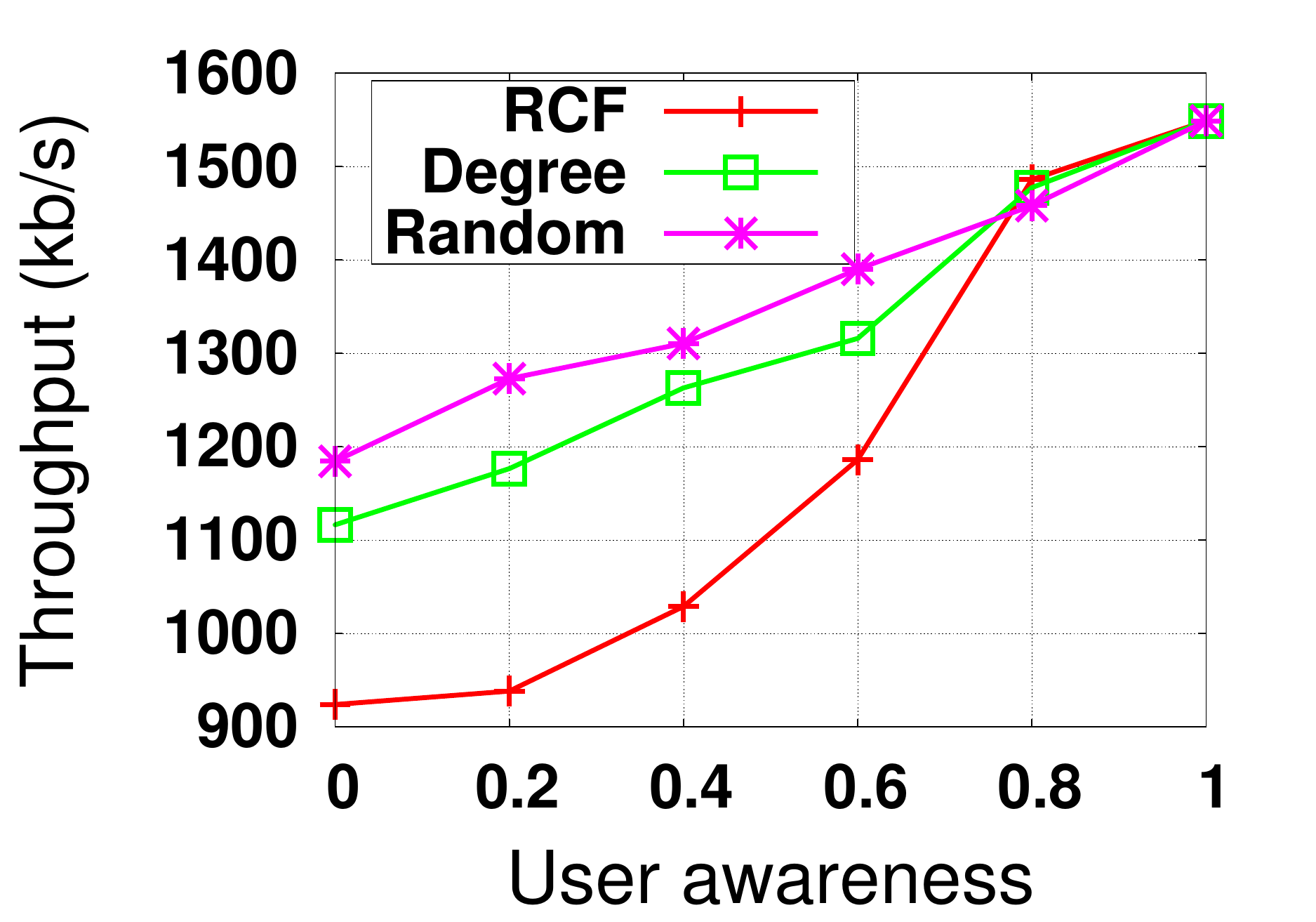}
 	\caption{Impact of user awareness on network throughput (adversary's budget is 10)}
 	\label{fig:ua_k10}
 \end{minipage}

\end{figure*}

Next, we examine how social connectivity impacts the throughput. In theory, in a network with higher level of social connections, rumors can cause larger influence than a network with less social connections, with the same seed set size. Therefore, the cellular network can be more vulnerable when the interdependent OSN is more densely connected. 

The result depicted in Figure \ref{fig:res} supports the analysis. At each seed set size,  we calculate the relative throughput decrease $Q_k$ for different scenarios. Denote the throughput after rumor propagation with $k$ seed nodes as $T_k$, we define $Q_k=\frac{T_0-T_k}{T_0}\times 100\%$. 

As can be seen from Figure \ref{fig:res}, $Q_k$ for the shopping mall scenario increases faster in most of the cases, which demonstrates that the social connections actually aid the rumor propagation, as well as increase the vulnerability. The slow increase in $Q_k$ for the shopping mall scenario after seed set size $8$ can be interpreted as a saturation case, that all D2D users are already influenced and the throughput cannot be further decreased. It explains why $Q_k$ of the stadium scenario surpasses that of the shopping mall scenario after seed set size $8$.

Then, we show in Figure \ref{fig:new_resource} the additional bandwidth cost requirement that arises from increasing the throughput to the level before rumor propagation. Consistent with our previous findings, the critical nodes found by RCF result in the largest bandwidth requirement and thus exerts the highest pressure to the BS. After the rumor propagation, the extra bandwidth required can be up to $1400$Hz. When a BS runs out of its own bandwidth and fails to supply the additional bandwidth, the throughput will be decreased drastically and the data requesters can experience a low data rate, which is not preferable. However, supply the extra requirement can be costly and thus a heavy burden to the BSs. This phenomenon emphasizes how vulnerable the cellular network can be, from another direction.  

\subsection{Reduce Vulnerability from the OSN}\label{ssec:awareness}
In this section, we study how users' actions can help in reducing the vulnerability of the cellular network. In OSNs, not all users are equally sensitive to rumors of D2D communication. Users having larger exposure to related knowledge are likely to identify the rumors and remain uninfluenced. So, they are not likely to opt out using D2D nor spread the rumor. In contrast, the users with limited knowledge to D2D are susceptible to rumors and are much more likely to opt out using D2D.  

To quantitatively describe the users' knowledge to D2D, we use the term "User Awareness" (UA) which ranges from $0.0$ to $1.0$ with $1.0$ means most aware of D2D related rumors. UA is in effect in the IC model in the following way. Denote the awareness for user $v$ as $W_v$. In the original IC model, node $u$ can influence node $v$ with probability $p_{uv}$. With UA, the probability is modified to $p_{uv}\times(1-W_v)$. Therefore, the most aware users ($W_v=1$) can never be influenced. Clearly, when users become more aware of the rumors, the rumor propagation will be limited and the vulnerability of the cellular network to the rumors will be reduced. 

In this experiment, we fix the number of seed nodes at $10$ and vary UA values between $0$ and $1$ in the stadium scenario. As the results for the shopping mall scenario have the similar trend, we choose not to include them for conciseness. Notice that we set the UA values equally for all nodes. Such a case can describe the situation in which the general public is educated about D2D and the individuals have similar awareness. Figure \ref{fig:ua_k10} clearly illustrates the trend that when the user are more aware, the cellular network is less vulnerable in terms of throughput. For the degree based and random methods, the throughput recovered linearly with respect to the increase of UA. However, the seed nodes selected by RCF are "resistant" to UA. When the UA value is $0.2$, the throughput after the RCF based rumor propagation only increased by $5\%$ compared with the case without UA. This experiment illustrates that the cellular network can still be vulnerable in the worst case, even if users start to be aware of the rumors.   

To have a thorough understanding of UA, we now present the network throughput achieved by RCF for different values of user awareness values under various seed set size. Figure \ref{fig:heatmap} is the heat map representing the impact of user awareness and the number of critical nodes on the network throughput. The throughput is depicted by the RGB colors. As the throughput value gets higher, the color becomes lighter in the heat map. From Figure \ref{fig:hm}, we can see that the color is the lightest, i.e. the throughput is the highest, in the top left corner where the number of critical nodes is the least and the users are the most aware. 

\begin{figure*}[!htbp]
  \centering
\begin{minipage}[c]{0.66\textwidth}
\centering
\subfloat[][The Heat Map]{\includegraphics[width=0.49\textwidth]{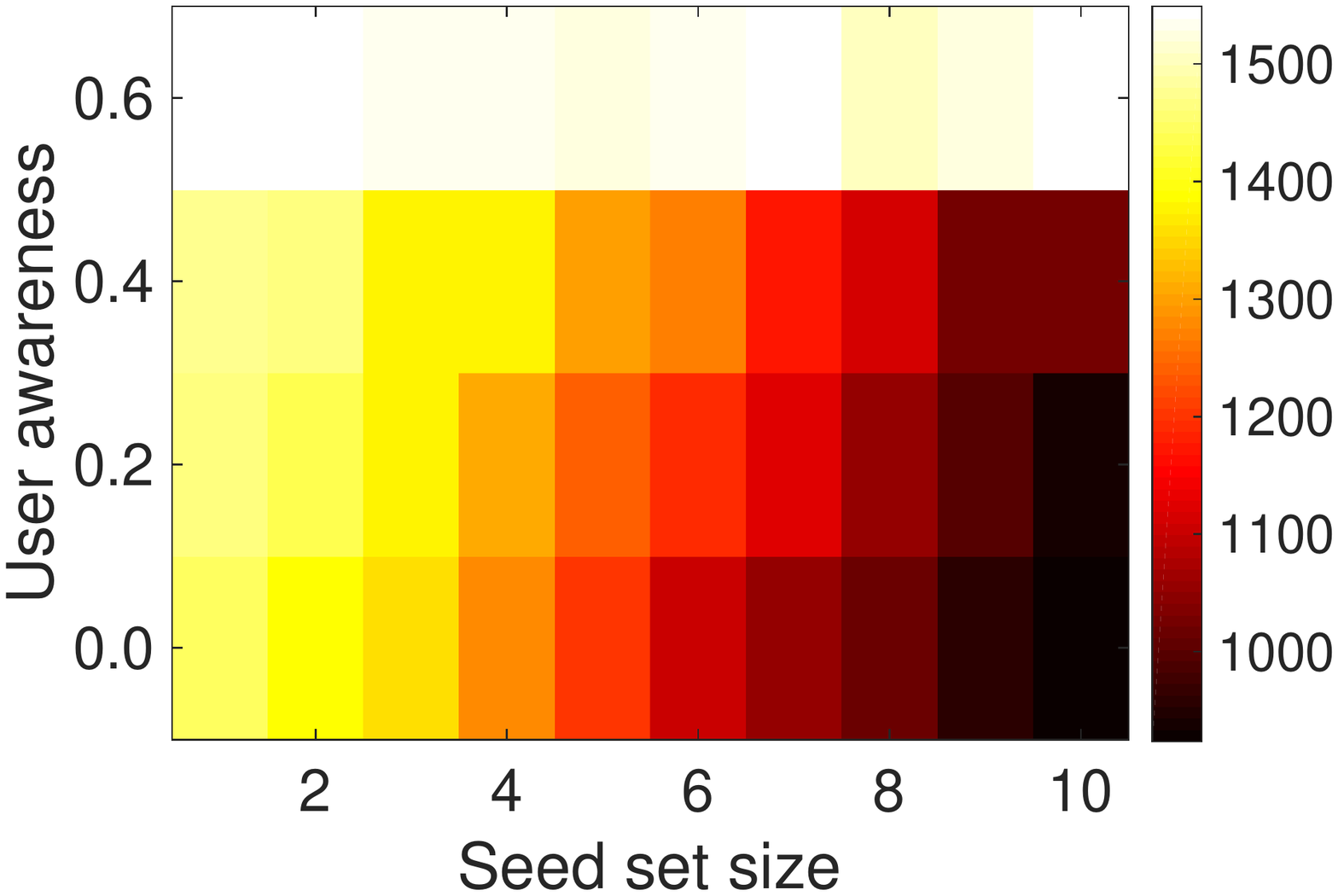}\label{fig:hm}}
~
\subfloat[][A snap shot from Figure \subref{fig:hm}: Effect of the seed set size on the throughput.]{\includegraphics[width=0.49\textwidth]{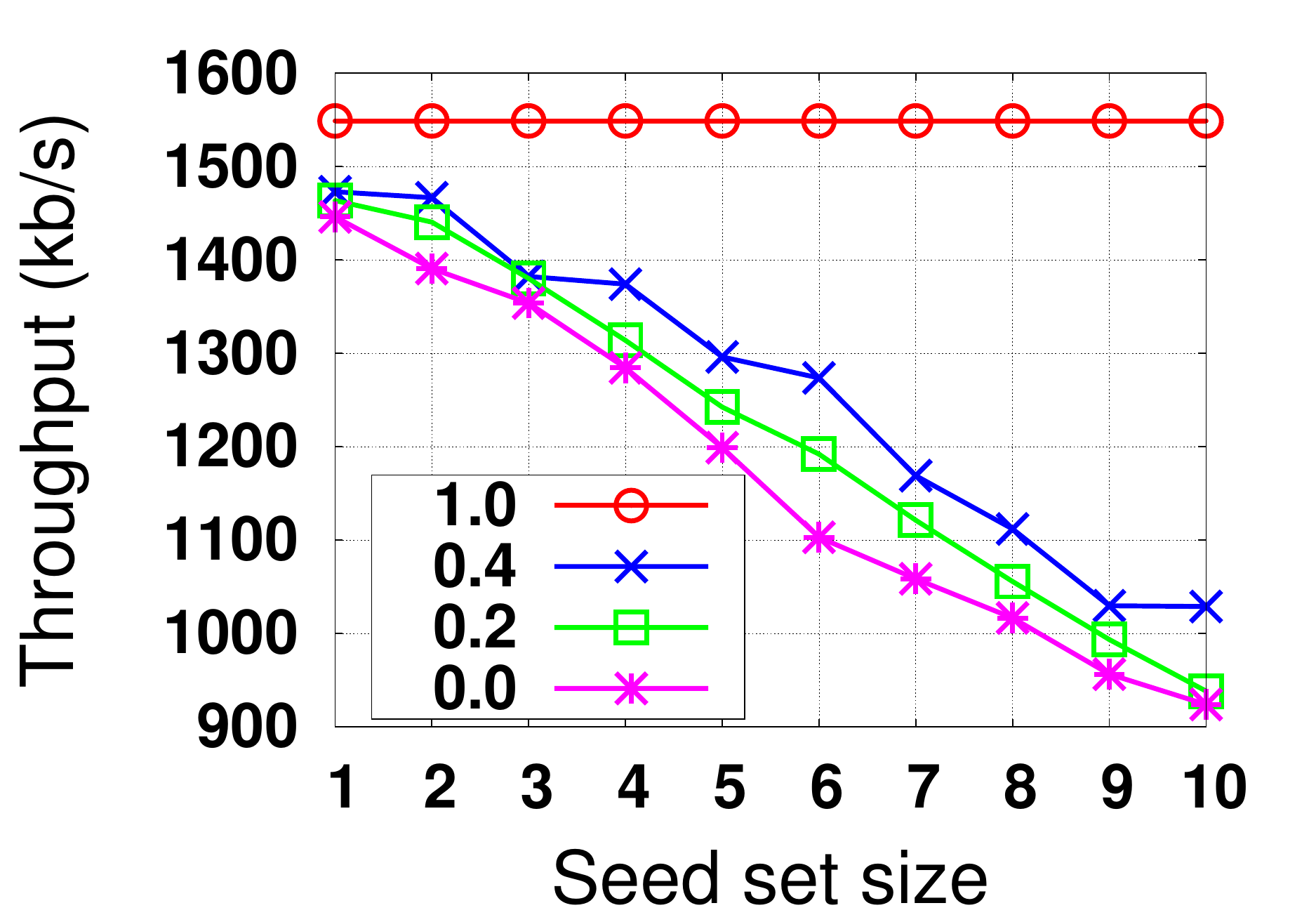}\label{fig:snap_shot}}
~

\caption{\label{fig:heatmap} Impact of user awareness on network throughput}
\end{minipage}%
\hfill
\begin{minipage}[c]{0.33\textwidth}
\centering
\includegraphics[width=1\textwidth]{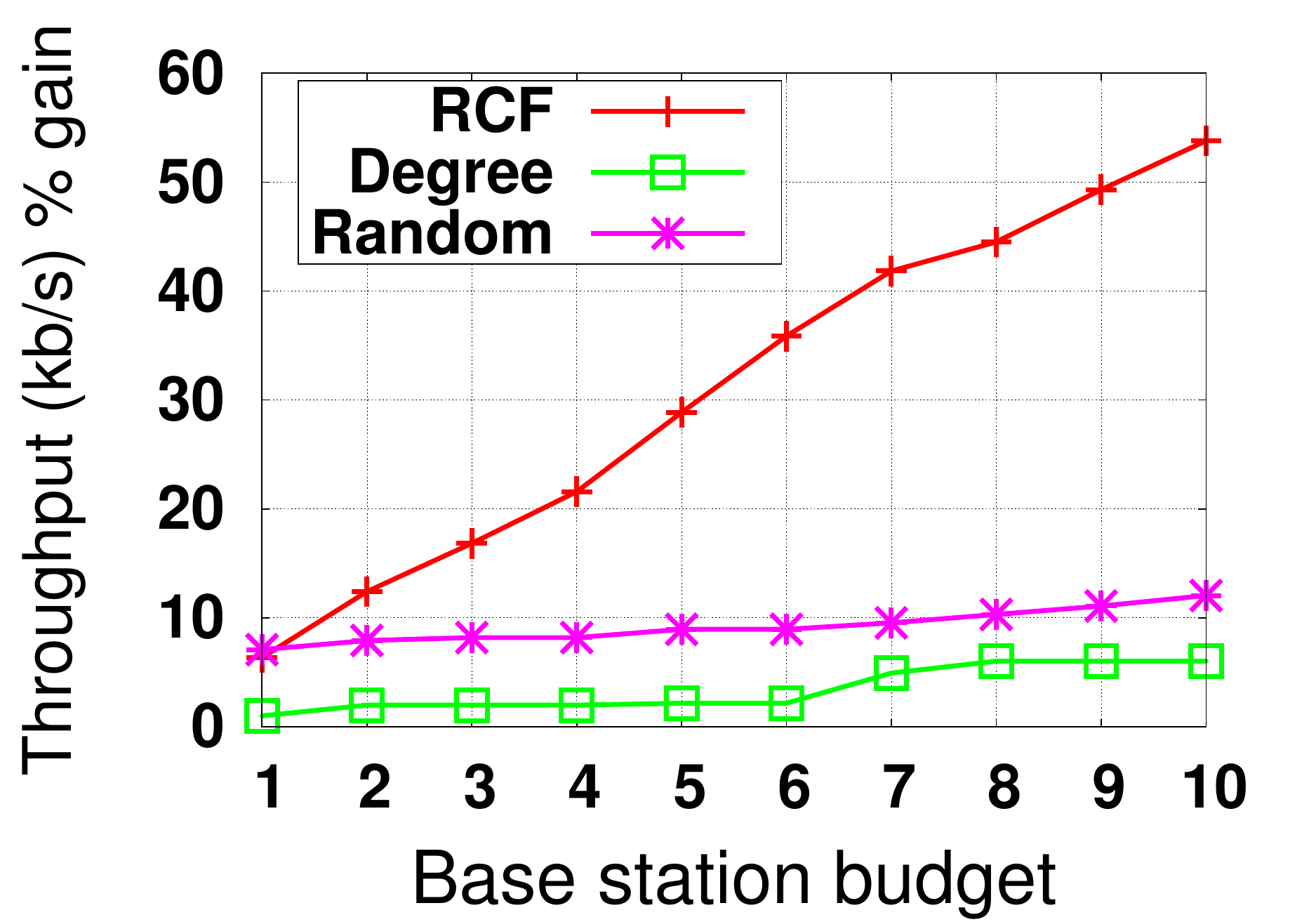}

\caption{BS incentivizes devices to block impact from rumors.\label{fig:bs_protect}}
\end{minipage}

\end{figure*}

Sample horizontal and vertical stripes are taken in the heat map and plotted in Figure \ref{fig:snap_shot}. The throughputs are shown for four different user awareness levels (1.0, 0.4, 0.2, 0.0).
As users become more prone to rumors, which is captured by smaller user awareness values, the network throughput decreases for fixed seed set size. On the other hand, when the number of seed users increases, the throughput decreases for a given user awareness.

\subsection{Reduce Vulnerability from the Cellular Network}\label{ssec: retention}
In Section \ref{ssec:awareness}, we demonstrate that the vulnerability can be reduced when the users can identify rumors about D2D and thus limit rumor propagation in the OSN. However, improve awareness for the general public may take time. In the meantime, the BS is responsible to ensure the usual network operation and continue to offer the best service. In this section, we discuss how the vulnerability can be reduced from the cellular network. To this end, the BS can identify the critical users to the D2D communication and incentivize them to stay using D2D. Using the same criticalness evaluation scheme, the BS can rank the devices based on their criticalness values. Once the top-ranked devices are identified, the BS offers those devices incentive to keep them in the cellular network. We now investigate how network throughput will change when the critical users are retained.

Figure \ref{fig:bs_protect} expresses the scenario where BS has a budget to incentivize the devices in the cellular network. In this experiment, we again use the stadium scenario only for conciseness and fix the number of critical nodes to $10$. We measure the performance of the incentive based approach in terms of throughput gain $G_l$ when the BS has budget $l$. Denote $T_{10}^{l}$ as the throughput calculated by RCF in the situation where the number of critical nodes is $10$ and the BS has budget of $l$. Define $G_l=\frac{G_l-T_{10}^{l}}{T_{10}^{l}}$, 
we can observe that the more budget the BS has, the larger throughput gain it achieves. Interestingly, this critical user retention scheme is much more effective against RCF compared with the other two approaches for critical nodes identification. The throughput gain is over $50\%$ with BS budget of $10$. This result suggests that incentives can be effective for reducing the vulnerability of the cellular network.

\section{Conclusion}\label{sec:conclusion}
In this paper, we investigated the vulnerability of D2D communication from interdependent OSN. To evaluate the vulnerability, we proposed the problem TMIN to find the critical nodes in the OSN. Then we developed a solution framework RCF to TMIN and discussed how it can be extended to consider general flow/diffusion networks. By experimental evaluation, we quantitatively demonstrated how vulnerable the cellular networks could be when rumors can propagate in the interdependent OSN. Also, we illustrated two possible approaches to reduce the vulnerability.

\bibliographystyle{unsrt}

\appendix
\section{Proofs}
\textbf{Proof for Theorem 1.}
\begin{proof}
We introduce the following two lemmas for the proof. 
\begin{lemma}\label{lemma:martingale}\cite{Tang15}
For any $\epsilon>0$, 
\begin{align*}\scriptsize
Pr[\sum_{i=1}^{T}x_i-T \mu_X\geq \epsilon T \mu_X]\leq e^{-\frac{\epsilon^2}{2+\frac{2}{3}\epsilon}T \mu_X}
\end{align*}
\end{lemma}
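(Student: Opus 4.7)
The plan is to prove this one-sided Bernstein-type inequality by the standard Cram\'er--Chernoff (exponential Markov) method, exploiting that in the setting of \cite{Tang15} the $x_i$ are $[0,1]$-valued random variables with conditional mean $\mu_X$ (i.e.\ $E[x_i \mid \mathcal{F}_{i-1}] = \mu_X$ with respect to the natural filtration), so that $M_t = \sum_{i=1}^t (x_i - \mu_X)$ is a martingale with bounded increments. The target tail bound is the classical Bennett/Bernstein bound specialized to bounded non-negative variables, so I would adapt the textbook derivation of that bound to the martingale setting.

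First I would apply the exponential Markov inequality: for any $\lambda > 0$,
\begin{equation*}
\Pr\!\left[\sum_{i=1}^{T} x_i - T\mu_X \geq \epsilon T \mu_X\right]
\;\leq\; e^{-\lambda \epsilon T \mu_X}\, E\!\left[e^{\lambda \sum_{i=1}^{T}(x_i-\mu_X)}\right].
\end{equation*}
To bound the joint moment generating function, I would condition on $\mathcal{F}_{T-1}$ and iterate: using $x_T \in [0,1]$ and the chord-above-curve bound $e^{\lambda x} \le 1 + x(e^\lambda - 1)$, then $1+y \le e^y$, I get
\begin{equation*}
E\!\left[e^{\lambda x_T} \,\big|\, \mathcal{F}_{T-1}\right] \;\leq\; 1 + \mu_X(e^\lambda - 1) \;\leq\; e^{\mu_X(e^\lambda - 1)},
\end{equation*}
hence $E[e^{\lambda(x_T-\mu_X)} \mid \mathcal{F}_{T-1}] \le e^{\mu_X(e^\lambda - 1 - \lambda)}$. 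Unrolling the tower property $T$ times yields $E[e^{\lambda M_T}] \le e^{T\mu_X(e^\lambda - 1 - \lambda)}$, and substituting back gives the bound $\exp\!\bigl(T\mu_X(e^\lambda - 1 - \lambda(1+\epsilon))\bigr)$.

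Next I would optimize in $\lambda$. Setting $\lambda = \ln(1+\epsilon)$ minimizes the exponent and produces
\begin{equation*}
\Pr\!\left[\sum_{i=1}^T x_i - T\mu_X \geq \epsilon T \mu_X\right]
\;\leq\; \exp\!\Bigl(-T\mu_X\bigl[(1+\epsilon)\ln(1+\epsilon) - \epsilon\bigr]\Bigr).
\end{equation*}
The final step is to invoke the standard calculus inequality
\begin{equation*}
(1+\epsilon)\ln(1+\epsilon) - \epsilon \;\geq\; \frac{\epsilon^2}{2 + \tfrac{2}{3}\epsilon}, \qquad \epsilon > 0,
\end{equation*}
which, after substitution, yields exactly the claimed bound $e^{-\frac{\epsilon^2}{2 + \frac{2}{3}\epsilon} T \mu_X}$.

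The main technical obstacle is verifying this last scalar inequality, which is the step that produces the specific constants $2$ and $2/3$ in the denominator; I would prove it by defining $g(\epsilon) = (1+\epsilon)\ln(1+\epsilon) - \epsilon - \frac{\epsilon^2}{2+\frac{2}{3}\epsilon}$, noting $g(0)=g'(0)=g''(0)=0$, and checking that $g'''(\epsilon) \ge 0$ for $\epsilon > 0$ so that $g$ is non-negative on $(0,\infty)$. A smaller but worthwhile check is justifying that the MGF iteration is valid under only the conditional-mean hypothesis $E[x_i \mid \mathcal{F}_{i-1}] = \mu_X$ and $x_i \in [0,1]$, which is exactly the martingale setting in which \cite{Tang15} applies the inequality and which covers the usage in Alg.~\ref{alg:tim}.
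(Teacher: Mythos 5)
The paper itself contains no proof of this lemma: it is imported verbatim from \cite{Tang15} and only used (in the proof of Theorem~\ref{theorem:timratio}), so there is no internal argument to compare against. Your derivation is the standard Chernoff--Bennett route for $[0,1]$-valued variables with conditional mean $\mu_X$, which is essentially how the bound is obtained in \cite{Tang15}: the exponential Markov step, the bound $E[e^{\lambda x_T}\mid\mathcal{F}_{T-1}]\le 1+\mu_X(e^\lambda-1)\le e^{\mu_X(e^\lambda-1)}$, the tower-property unrolling, and the optimization at $\lambda=\ln(1+\epsilon)$ giving the exponent $-T\mu_X[(1+\epsilon)\ln(1+\epsilon)-\epsilon]$ are all correct, and you rightly make explicit the hypotheses ($x_i\in[0,1]$, $E[x_i\mid\mathcal{F}_{i-1}]=\mu_X$) that the paper leaves implicit and that hold for the variables $X^i_{\bar S}$ to which the lemma is applied.

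There is, however, a concrete flaw in your plan for the last scalar inequality. With $g(\epsilon)=(1+\epsilon)\ln(1+\epsilon)-\epsilon-\frac{\epsilon^2}{2+\frac{2}{3}\epsilon}$ one indeed has $g(0)=g'(0)=g''(0)=0$, but the claim $g'''(\epsilon)\ge 0$ for all $\epsilon>0$ is false: $g'''(\epsilon)=\frac{81}{(3+\epsilon)^4}-\frac{1}{(1+\epsilon)^2}$, and $81(1+\epsilon)^2\ge(3+\epsilon)^4$ is equivalent to $9(1+\epsilon)\ge(3+\epsilon)^2$, i.e.\ to $\epsilon\le 3$, so $g'''<0$ on $(3,\infty)$ and the check you propose would not go through. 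The repair is easy and in fact shorter: stop at the second derivative. Since $g''(\epsilon)=\frac{1}{1+\epsilon}-\frac{27}{(3+\epsilon)^3}$ and $(3+\epsilon)^3-27(1+\epsilon)=\epsilon^2(9+\epsilon)\ge 0$ for $\epsilon\ge 0$, one has $g''\ge 0$ on $[0,\infty)$, which together with $g(0)=g'(0)=0$ gives $g\ge 0$ and hence $(1+\epsilon)\ln(1+\epsilon)-\epsilon\ge\frac{\epsilon^2}{2+\frac{2}{3}\epsilon}$. With that correction your argument is complete and yields exactly the stated bound.
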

\begin{lemma}\label{lemma:lbsamples}\cite{Tang15}
If the number of RR sets is at least
\begin{align}\label{eqn:lbsamples}
Q^* =\frac{2|V^s|\phi^2}{\mathbb{I}^c(S^*)}
\end{align}
where
\begin{align*}
&\phi = \frac{(1-1/e)\sigma+\tau}{\epsilon}, \sigma =\sqrt{\ln(\frac{3}{\delta})}\\ 
&\tau = \sqrt{(1-\frac{1}{e})(\ln\binom{|V^s|}{k}+\ln\frac{3}{\delta})}
\end{align*}
The seed set $\bar{S}$ outputted by Alg. \ref{alg:tim} satisfies 
\begin{align*}
\mathbb{I}^c(\bar{S})\geq (1-\frac{1}{e}-\epsilon)\mathbb{I}^c(S^*)
\end{align*}
with probability at least $1-\frac{2\delta}{3}$.
\end{lemma}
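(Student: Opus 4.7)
\textbf{Proof plan for Lemma \ref{lemma:lbsamples}.} The plan is to port the standard RIS concentration argument of \cite{Tang15} to our criticality-weighted setting. The key object is the unbiased estimator $\bar{\mathbb{I}}^c(S)=\frac{\deg_{\mathcal{R}}(S)}{|\mathcal{R}|}\Omega$. When each RR set is rooted at node $v$ with probability $cr_v/\Omega$, a standard coupling between RR sets and forward influence gives $\Pr[\tilde{R}\text{ is covered by }S]=\mathbb{I}^c(S)/\Omega$, so $\mathbb{E}[\bar{\mathbb{I}}^c(S)]=\mathbb{I}^c(S)$. Writing $\deg_{\mathcal{R}}(S)=\sum_{i=1}^{|\mathcal{R}|}x_i$ as a sum of i.i.d.\ $\{0,1\}$-variables with mean $\mu_X=\mathbb{I}^c(S)/\Omega$ then lets us apply Lemma \ref{lemma:martingale} directly.

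First I would bound the lower-tail event $E_1=\{\bar{\mathbb{I}}^c(S^*)<(1-\epsilon_1)\mathbb{I}^c(S^*)\}$, where $S^*$ is the (unknown) optimal size-$k$ seed set. The lower-tail version of Lemma \ref{lemma:martingale} applied with $T=|\mathcal{R}|\ge Q^*$ and $\mu_X=\mathbb{I}^c(S^*)/\Omega$ yields an exponential bound on $\Pr[E_1]$. Picking $\epsilon_1$ of order $\sigma/\phi$ and substituting $Q^*=2|V^s|\phi^2/\mathbb{I}^c(S^*)$ drives the exponent down to $\ln(3/\delta)$, so $\Pr[E_1]\le\delta/3$.

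Next I would handle the upper-tail event $E_2$ that some ``bad'' size-$k$ set $S$ with $\mathbb{I}^c(S)<(1-1/e-\epsilon)\mathbb{I}^c(S^*)$ nevertheless satisfies $\bar{\mathbb{I}}^c(S)\ge(1-1/e)(1-\epsilon_1)\mathbb{I}^c(S^*)$. For any fixed such $S$, the upper-tail direction of Lemma \ref{lemma:martingale} bounds the single-set deviation; the gap between the truth and the threshold is of order $\tau/\phi$ relative to $\mathbb{I}^c(S^*)$, and a union bound over the $\binom{|V^s|}{k}$ candidate size-$k$ sets costs an extra $\ln\binom{|V^s|}{k}$ inside the exponent. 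This is exactly compensated by the $\tau=\sqrt{(1-1/e)(\ln\binom{|V^s|}{k}+\ln(3/\delta))}$ term in $\phi$, giving $\Pr[E_2]\le\delta/3$.

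Finally, on $\overline{E_1}\cap\overline{E_2}$, the greedy weighted max-coverage subroutine produces $\bar{S}$ with $\deg_{\mathcal{R}}(\bar{S})\ge(1-1/e)\deg_{\mathcal{R}}(S^*)$ (since $S^*$ is a feasible size-$k$ set), so $\bar{\mathbb{I}}^c(\bar{S})\ge(1-1/e)\bar{\mathbb{I}}^c(S^*)\ge(1-1/e)(1-\epsilon_1)\mathbb{I}^c(S^*)$ by $\overline{E_1}$; then $\overline{E_2}$ rules out $\bar{S}$ being ``bad,'' forcing $\mathbb{I}^c(\bar{S})\ge(1-1/e-\epsilon)\mathbb{I}^c(S^*)$. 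Union-bounding over $E_1\cup E_2$ gives a total failure probability of at most $2\delta/3$, as claimed. The main obstacle is the parameter calibration: one must pick $\epsilon_1$ so that the residual gap the upper-tail bound has to cover equals $[\epsilon-(1-1/e)\epsilon_1]\mathbb{I}^c(S^*)$, which after the union bound translates into precisely the $\tau$-term contribution. This algebraic balancing is exactly what produces the combination $\phi=((1-1/e)\sigma+\tau)/\epsilon$ in the definition of $Q^*$; once it is carried out, substituting back into the two exponents yields $\delta/3$ each and the claimed approximation ratio.
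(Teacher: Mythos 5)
Your proposal is correct in outline, but it relates to the paper differently than you might expect: the paper never proves Lemma~\ref{lemma:lbsamples} at all. The lemma is imported from \cite{Tang15} (with $\delta$ replaced by $\frac{2\delta}{3}$, as the authors note right after its statement), and the paper's appendix only proves the complementary claim that the stopping rule of Alg.~\ref{alg:tim} yields at least $Q^*$ RR sets with probability $1-\frac{\delta}{3}$, then combines the two by a union bound. What you have written is essentially a reconstruction of the proof inside \cite{Tang15}, transplanted to the criticality-weighted setting: the two-event decomposition (lower tail for $S^*$ with $\epsilon_1=\sigma/\phi$; upper tail plus a union bound over the $\binom{|V^s|}{k}$ candidate sets for the ``bad'' sets, so that the residual gap $\epsilon-(1-1/e)\epsilon_1=\tau/\phi$ absorbs the $\ln\binom{|V^s|}{k}$ cost), followed by the $(1-1/e)$ guarantee of greedy weighted max coverage on $\overline{E_1}\cap\overline{E_2}$. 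That is the right argument, and your calibration is exactly what produces $\phi=((1-1/e)\sigma+\tau)/\epsilon$; this buys the paper a self-contained justification it currently delegates to a citation. One caveat you should resolve in a full write-up: you correctly take $\mu_X=\mathbb{I}^c(S)/\Omega$ in the weighted setting, but $Q^*$ as stated has $|V^s|$ rather than $\Omega$ in the numerator, so $Q^*\mu_X=2\phi^2|V^s|/\Omega$ and your exponents come out as $(|V^s|/\Omega)\ln(3/\delta)$ and the like; the claimed $\delta/3$ per event follows only when $\Omega\le|V^s|$ (e.g., after normalizing the criticality values), and otherwise the threshold should be restated with $\Omega$ in place of $|V^s|$. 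This wrinkle is inherited from the paper's verbatim use of the unweighted statement of \cite{Tang15} rather than a flaw in your decomposition, but your sketch glosses over it when asserting the exponent ``drives down to $\ln(3/\delta)$.''
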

Notice that Lemma \ref{lemma:lbsamples} is not in its original form, as we substitute $\delta$ with $\frac{2\delta}{3}$. 

Next, we show that when Alg. \ref{alg:tim} stops, the event that the total number of RR sets $|\mathcal{R}|$ generated is less than $Q^*$ can happen with probability at most $\frac{\delta}{3}$.

Here we consider Bernoulli random variables $X_{\bar{S}}^i=\min \{1,|\bar{S}\cap \tilde{R}|\}$ where $\tilde{R}$ is a random RR set. Due to the BSA sampling algorithm \cite{nguyen2016targeted}, the mean of $X_{\bar{S}}$ is $\mu_{X_{\bar{S}}} = \frac{\mathbb{I}^c(\bar{S})}{\Omega}$.
	\begin{align*}\scriptsize
		&\Pr[|\mathcal{R}|\leq Q^*] \leq \Pr[\sum_{i=1}^{|\mathcal{R}|}X_{\bar{S}}\leq \sum_{i=1}^{Q^*}X_{\bar{S}}]\\
        &=\Pr[\deg_{\mathcal{R}}(\bar{S})\leq \sum_{i=1}^{Q^*}X_{\bar{S}}]\leq Pr[\gamma\leq \sum_{i=1}^{Q^*}X_{\bar{S}}]\\
		&\leq Pr[\frac{(1+\log \frac{3}{\delta}\frac{1}{\phi^2})Q^*\times \mathbb{I}^c(\bar{S})}{\Omega}\times \frac{\Omega\mathbb{I}^c(S^*)}{|V^s|\mathbb{I}^c(\bar{S})}\leq \sum_{i=1}^{Q^*}X_{\bar{S}}]\\
        &\leq  Pr[\sum_{i=1}^{Q^*}X_{\bar{S}}-Q^*\mu_{X_{\bar{S}}}\geq \log \frac{3}{\delta}\frac{1}{\phi^2}Q^*\mu_{X_{\bar{S}}}\frac{\Omega\mathbb{I}^c(S^*)}{|V^s|\mathbb{I}^c(\bar{S})}] \\
        &\leq \exp(-\frac{(\log \frac{3}{\delta}\frac{1}{\phi^2})^2(\frac{\Omega\mathbb{I}^c(S^*)}{|V^s|\mathbb{I}^c(\bar{S})})^2}{2+\frac{2}{3}(\log \frac{3}{\delta}\frac{1}{\phi^2})\frac{\Omega\mathbb{I}^c(S^*)}{|V^s|\mathbb{I}^c(\bar{S})}}Q^* \mu_{X_{\bar{S}}}) \quad\text{(Lemma \ref{lemma:martingale})}\\
        &\leq \exp(-\frac{(\log \frac{3}{\delta}\frac{1}{\phi^2})^2\mathbb{I}^c(S^*)}{2\log \frac{3}{\delta}\frac{1}{\phi^2}|V^s|}Q^*)\\
        &\leq \exp(-\frac{\log \frac{3}{\delta}\frac{1}{\phi^2}\mathbb{I}^c(S^*)}{2|V^s|}Q^*)\leq \frac{\delta}{3}
\end{align*}
Combining Lemma \ref{lemma:lbsamples} and the result above, by union bound, we obtain the desired result for Theorem \ref{theorem:timratio}.
\end{proof}

\end{document}